\date{\today}
\@date \else {\vskip3ex \centering\footnotesize\@date\par\vskip-4ex}\fi
\else \@footnotetext{\@setdate}\fi}
\newcolumntype{C}[1]{>{\centering\arraybackslash}m{#1}}
\newtheorem{thm}{Theorem}
\newtheorem{lem}[thm]{Lemma}
\crefname{lem}{Lemma}{Lemmas}
\crefname{thm}{Theorem}{Theorems}
\newcommand{\x}{\mathbf{x}}
\newcommand{\y}{\mathbf{y}}
\renewcommand{\L}{\mathbb{L}}
\newcommand{\Lp}{\mathbb{L}^+}
\newcommand{\z}{\zeta}
\newcommand{\oz}{\overline{\zeta}}
\newcommand{\e}{\eta}
\renewcommand{\oe}{\overline{\eta}}
\newcommand{\btm}{\vspace{1ex}\(\displaystyle} 
\newcommand{\etm}{\)\vspace{1ex}} 
\title{Adsorbing staircase polygons subject to a force}
\author{Nicholas R. Beaton}
\address{School of Mathematics and Statistics, The University of Melbourne, VIC 3010, Australia}
\begin{document}
\maketitle

\begin{abstract}
We study several models of staircase polygons on the $45^\circ$ rotated square lattice, which interact with an impenetrable surface while also being pushed towards or pulled away from the surface by a force. The surface interaction is governed by a fugacity $a$ and the force by a fugacity $y$. Staircase polygons are simplifications of more general self-avoiding polygons, a well-studied model of interacting ring polymers. For this simplified case we are able to exactly determine the limiting free energy in the full $a$-$y$ plane, and demonstrate that staircase polygons exhibit four different phases, including a ``mixed'' adsorbed-ballistic phase.
\end{abstract}

\section{Introduction}\label{sec:intro}

The study of lattice walks as models of polymers in solution has a rich history -- see~\cite{vanRensburg2000Statistical} for a thorough treatment. One particular scenario that has yielded many results is that of a polymer adsorbing onto an impenetrable surface. This can be elegantly modelled by a self-avoiding walk in a half-space of a lattice, with a temperature-dependent fugacity (Boltzmann weight) associated with points at which the walk touches the boundary of the half-space. A variety of exact~\cite{Beaton2014Critical2,Hammersley1982Selfavoiding} and numerical~\cite{Beaton2012Twodimensional,Guttmann2014Pulling,vanRensburg2004Multiple,Luo2008Critical} results have been found, particularly regarding the location and nature of the adsorption phase transition.

A related scenario is that of a long polymer tethered to a surface, with a pulling force applied to part of polymer at some angle to the surface. Atomic force microscopy~\cite{Haupt_1999} has allowed for such experiments in the laboratory. Lattice models are again a useful tool for investigating such systems~\cite{Beaton_2015,Guttmann2014Pulling,Janse_van_Rensburg_2017,Krawczyk2005Pulling}. Many of these and other studies have considered polymer models incorporating both interactions with the surface and a pulling force, and the phase diagrams that result from the interplay between the two.

Instead of linear polymers, in nature there also exist ring polymers, which have a natural circular structure. The relevant lattice objects are then self-avoiding polygons, rather than walks. It is again possible to investigate adsorption~\cite{vanRensburg1999Collapsing,MR1953336,Soteros_1992}, pulling forces~\cite{Atapour2009Stretched,Beaton_2016,vanRensburg2008Knotting}, and both~\cite{GvRJW2017Polygons}.

With some exceptions (e.g.~\cite{Beaton_2015,Beaton2014Critical2}), most exact quantitative results regarding interacting self-avoiding walks and polygons are non-rigorous, and typically depend on assumptions about the scaling limit of the model. For this reason (and others), there has been considerable attention given to subclasses, which often allow for exact solutions for various thermodynamic quantities. Some of the simplest models which still display rich physical behaviour include directed paths (see~\cite{vanRensburg2003Statistical} for a thorough review) and polygons~\cite{vanRensburg1999Adsorbing}.

\subsection{Interacting self-avoiding walks and polygons}\label{ssec:saws_and_saps}

In the most general model, we let $p_n(v,h)$ be the number of self-avoiding polygons of length $n$ on $\mathbb Z^d$, counted up to translation, with
\begin{itemize}
\item all vertices having nonnegative $\x_d$-coordinate,
\item $v>0$ vertices lying in the hyperplane $\x_d=0$, and
\item $h$ being the maximal $\x_d$-coordinate among all vertices.
\end{itemize}
(Note that $p_n(v,h) \geq 0$ only if $n$ is even.) Then define the partition function
\[P_n(a,y) = \sum_{v,h} p_n(v,h) a^v y^h.\]
Here we interpret $a=\exp(-\alpha/kT)$ and $y=\exp(f/kT)$, where $\alpha$ is the energy associated with a single surface contact, $f$ is the pulling force, $T$ is absolute temperature, and $k$ is Boltzmann's constant. When $\alpha<0$, adsorption may occur for sufficiently small $T$. Likewise, for sufficiently large $f$, the polygons may become ballistic, and reach vertices at distance $O(n)$ above the surface.

For all $d\geq 2$ and $a,y\geq 0$, it is known~\cite{vanRensburg2008Knotting,Soteros_1992} that the limiting free energies
\[\kappa_0(a) := \lim_{n\to\infty} \frac{1}{2n}\log P_{2n}(a,1) \qquad \text{and} \qquad \lambda_0(y) := \lim_{n\to\infty} \frac{1}{2n}\log P_{2n}(1,y)\]
exist. They are convex functions of $\log a$ (resp.~$\log y$), and are thus continuous and almost-everywhere differentiable. It is also known~\cite{Soteros_1992} that $\kappa_0(a)$ has a point of non-analyticity at some $a^0_c>1$, separating the desorbed (free) phase $a<a^0_c$ (where $\kappa_0(a) = \kappa$, the connective constant of the lattice) and the adsorbed phase $a>a^0_c$ (where $\kappa_0(a) > \kappa$). Likewise, $\lambda_0(y)$ has a point of non-analyticity at $y^0_c=1$~\cite{Beaton_2015,GvRJW2017Polygons,vanRensburg2008Knotting}, separating the free phase (where $\lambda_0(y) = \kappa$) and the ballistic phase (where $\lambda_0(y) > \kappa$).

In $d\geq 3$ dimensions, it has also been shown~\cite{GvRJW2017Polygons} that the two-parameter free energy
\[\psi_0(a,y) := \lim_{n\to\infty}\frac{1}{2n} \log P_n(a,y)\]
exists for all $a,y>0$. Moreover
\begin{equation}\label{eqn:3d_saps_fe}
\psi_0(a,y) = \max\left\{\kappa_0(a), \lambda_0(y)\right\}.	
\end{equation}
The relation~\eqref{eqn:3d_saps_fe} completely (qualitatively) characterises the phase diagram for $d\geq3$, showing that there are three phases: free (when $a<a^0_c$ and $y<1$), adsorbed (when $a>a^0_c$ and $\kappa_0(a) > \lambda_0(y)$) and ballistic (when $y>1$ and $\lambda_0(y) > \kappa_0(a)$). The model has two natural order parameters: the limiting density of surface visits
\[\mathbf V := \lim_{n\to\infty} \left\langle \frac{v}{n}\right\rangle = a\frac{\partial}{\partial a} \psi_0(a,y),\]
where the expectation is taken with respect to the Boltzmann distribution on polygons of length $n$; and the limiting ``density'' of the height
\[\mathbf H := \lim_{n\to\infty} \left\langle \frac{h}{n} \right\rangle = y\frac{\partial}{\partial y} \psi_0(a,y).\]
Then in the free phase $\mathbf V = \mathbf H = 0$, while in the adsorbed phase $\mathbf V>0$ and in the ballistic phase $\mathbf H >0$.

In $d=2$ dimensions less is known, with only bounds on the $\liminf$ and $\limsup$ having been proven. In particular, it is not known whether there are only three phases as in $d\geq 3$, or whether there are one or more additional, ``mixed'' phases, where the free energy depends on both $a$ and $y$ and in which the polygons are both adsorbed and ballistic.

\subsection{Outline of the paper}\label{ssec:outline}

In this paper we investigate a directed version of the adsorbing and pulled/pushed polygons discussed above. In particular, our goal is to find and study a simple two-dimensional model which displays a mixed adsorbed and ballistic phase. To this end we employ staircase polygons, for which a number of existing enumerative results are already known.

In \cref{sec:themodel} we define interacting staircase polygons and two main subclasses which will be of use in our investigation, and also state our main theorems. In \cref{sec:pairs-of-paths} we make use of well-known results which relate staircase polygons to pairs of nonintersecting paths, in order to exactly enumerate one of our subclasses. The asymptotic behaviour of these enumerations is covered in \cref{sec:grafted_asymps}, leading to a complete phase diagram. In \cref{sec:centred_lbs,sec:ub} we find lower and upper bounds respectively for all staircase polygons, with one proof for the lower bound being deferred to \cref{app:bridges}. In \cref{sec:conclusion} we offer some closing remarks.

\section{The model: staircase polygons}\label{sec:themodel}

\subsection{Binomial paths}\label{ssec:binomial_paths}

Let \(\L\) be a square lattice which has been rotated \(45^\circ\) from the usual orientation and scaled by \(\sqrt{2}\), so that all vertices have integer coordinates (in particular, all integer points \((\x,\y)\) with \(\x \equiv \y \,(\text{mod }2)\) are vertices). A \emph{binomial path} \(\omega\) on \(\L\) is a sequence of vertices \((\omega_0,\omega_1,\dots,\omega_n)\) such that \(\omega_i - \omega_{i-1} \in \{(1,1),(1,-1)\}\) for all \(i=1,\dots,n\). That is, \(\omega\) is a path comprising \((1,1)\) (``up'' or ``north-east'') steps and \((1,-1)\) (``down'' or ``south-east'') steps.

We will first briefly review some known results about the physics of adsorbing and pulled binomial paths. Let \(\Lp \subset \L\) be the subspace defined by \(\y \geq 0\), and define $c_n(v,h)$ to be the number of $n$-step binomial paths $\omega$ in $\Lp$ which begin at vertex $(0,0)$ and finish at vertex $(n,h)$, and which contain $v$ vertices in the line $\y=0$. Define the partition function
\[C_n(a,y) = \sum_{h,v} c_n(v,h) a^v y^h\]
and the generating function
\[C(t;a,y) = \sum_{n,v,h} c_n(v,h) t^n a^v y^h = \sum_n C_n(a,y) t^n.\]
Then~\cite[Eqn.~5.47]{vanRensburg2000Statistical}
\begin{equation}\label{eqn:binomial_paths_full_gf}
C(t;a,y) = \frac{2a\left(1-2t^2+\sqrt{1-4t^2}\right)}{\left(1-2t^2a+\sqrt{1-4t^2}\right)\left(1-2ty+\sqrt{1-4t^2}\right)}.
\end{equation}
(Note that we are associating a weight $a$ with the first vertex of a path, while this is not the case in~\cite{vanRensburg2000Statistical}.)

It is a central tenet of enumerative combinatorics that the dominant singularity (i.e.~the one closest to the origin) of a generating function determines the asymptotic behaviour of the coefficients (see e.g.~\cite{Flajolet2009Analytic}). Viewing $C(t;a,y)$ as a Taylor series in $t$ with coefficients $C_n(a,y) \in \mathbb Z[a,y]$, we can then examine the generating function for various $a,y\geq0$ in order to find the dominant singularity. Denoting this quantity by $t_c(a,y)$, we have
\begin{equation}
t_c(a,y) = \min\left\{\textstyle \frac12,\frac{y}{y^2+1},\frac{\sqrt{a-1}}{a}\right\} = \begin{cases} \frac12 &\quad a\leq 2 \text{ and } y \leq 1 \\
 \frac{y}{y^2+1} &\quad y>1 \text{ and } a<	y^2+1 \\
 \frac{\sqrt{a-1}}{a} &\quad a > 2 \text{ and } y <\sqrt{a-1}.
 \end{cases}
\end{equation}
It follows that
\begin{equation}
\psi^\textup{P}(a,y) := \lim_{n\to\infty} \frac1n\log C_n(a,y) = \max\left\{\lambda^\textup{P}(y),\kappa^\textup{P}(a)\right\} \label{eqn:psiP_maxform}
\end{equation}
where
\[\lambda^\textup{P}(y) = \begin{cases} \log2 &\quad y \leq 1 \\ \log(y^2+1)-\log y &\quad y>1\end{cases}\]
and
\[\kappa^\textup{P}(a) = \begin{cases} \log2 &\quad a \leq 2 \\ \log a-\frac12\log(a-1) & \quad a>2.\end{cases}\]

\subsection{Staircase polygons}\label{ssec:staircase_polygons}

A \emph{staircase polygon} (also known as a \emph{parallelogram polygon}) $\pi$ is a pair $(\pi^+,\pi^-)$ of two binomial paths \(\pi^+ = (\pi^+_0,\dots, \pi^+_n)\) and \(\pi^- = (\pi^-_0,\dots,\pi^-_n)\) with
\begin{itemize}
	\item \(\pi^+_0 = \pi^-_0\),
	\item \(\pi^+_n = \pi^-_n\), and 
	\item \(\pi^+_i > \pi^-_i$ for $i=1,\dots,n-1\).
\end{itemize}
That is, \(\pi\) consists of two binomial paths \(\pi^+\) and \(\pi^-\) which start and end at the same points, but otherwise with \(\pi^+\) strictly above \(\pi^-\). If the two paths each have \(n\) steps then we say that \(\pi\) has \emph{length} \(2n\).

Let \(\mathcal P^\textup{S}_{2n}\) be the set of staircase polygons of length \(2n\) which
\begin{itemize}
	\item lie entirely within \(\Lp\),
	\item contain at least one vertex in the line \(\y = 0\), and
	\item have their leftmost vertex in the line \(\x = 0\).
\end{itemize}
(The last condition is equivalent to saying that polygons in $\mathcal P^\textup{S}_{2n}$ are considered modulo translation in the $\pm \x$ directions.) Let $p^\textup{S}_{2n} = |\mathcal P^\textup{S}_{2n}|$.

For a staircase polygon $\pi\in \mathcal P^\textup{S}_{2n}$, let $v(\pi)$ be the number of vertices of $\pi$ in the line $\y = 0$. Such vertices will be called \emph{visits}. Let $h(\pi)$ be the $\y$-coordinate of $\pi^+_{\lfloor n/2 \rfloor}$. This will be called the \emph{height} of $\pi$. See \cref{fig:surface_staircase}. For $a,y\in \mathbb R^+$ we then define the partition function
\begin{equation}\label{eqn:staircase_PF}
	P^\textup{S}_{2n}(a,y) := \sum_{\pi \in \mathcal P^\textup{S}_{2n}} a^{v(\pi)}y^{h(\pi)}.
\end{equation}
(Note the difference between the weight $y$ and the ordinate $\y$ of a lattice point.)

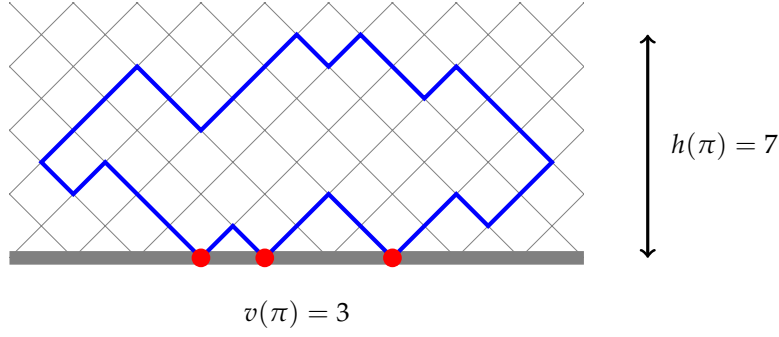
\begin{figure}
	\centering
	\begin{tikzpicture}[scale=0.6,rotate=-45]
		\begin{scope}
			\clip (6,0) -- (15,9) -- (11,13) -- (2,4) -- cycle;
			\draw[help lines] (0,0) grid (15,15);
		\end{scope}
		\draw[line width=1ex, gray] (6,0) -- (15,9);
		\draw[ultra thick, blue] (5,2) -- (6,2) -- (6,3) -- (9,3) -- (9,4) -- (10,4) -- (10,6) -- (12,6) -- (12,8) -- (13,8) -- (13,10) -- (10,10) -- (10,9) -- (8,9) -- (8,8) -- (7,8) -- (7,5) -- (6,5) -- (5,5) -- cycle;
		\node [circle, fill=red, inner sep=2.5pt] at (9,3) {};
		\node [circle, fill=red, inner sep=2.5pt] at (10,4) {};
		\node [circle, fill=red, inner sep=2.5pt] at (12,6) {};
		\draw[very thick, <->] (16,10) -- (12.5,13.5);
		\node at (14.25,11.75) [right=1ex] {\small $h(\pi)=7$};
		\node at (11,4) [below] {\small $v(\pi)=3$};
	\end{tikzpicture}
	\caption{A half-space staircase polygon of length 32, with 3 visits and height 7.}
	\label{fig:surface_staircase}
\end{figure}

We define the limiting free energy to be
\begin{equation}\label{eqn:staircase_FE}
	\psi^\textup{S}(a,y) := \lim_{n\to\infty} \frac1{2n} \log P^\textup{S}_{2n}(a,y).
\end{equation}

At this point it is not immediately obvious that this limit should even exist. In \cref{thm:psi+} we will prove its existence and give its value for all $a$ and $y$. Before stating that theorem, however, we will introduce two subsets of staircase polygons, for which certain calculations will be easier.

\subsection{Grafted staircase polygons}\label{ssec:grafted_model}

We will say a polygon $\pi\in\mathcal P^\textup{S}_{2n}$ is \emph{grafted} if
\begin{itemize}
	\item the leftmost vertex of $\pi$ has $\y$-coordinate 1, and
		\subitem $\circ$ if $n$ is even, the rightmost vertex of $\pi$ has $\y$-coordinate 1, or
		\subitem $\circ$ if $n$ is odd, the rightmost vertex of $\pi$ has $\y$-coordinate 2.
\end{itemize}
See \cref{fig:grafted_staircase}. If $n$ is even, note that by removing the two leftmost and two rightmost edges (whose locations and directions are completely fixed), a grafted polygon can be viewed as two nonintersecting Dyck paths: the lower path starting and ending on the $\x$-axis, and the upper path starting and ending on the line $\y=2$. If $n$ is odd, then doing the same almost gives two Dyck paths -- each path will end one unit higher than where it started.

\begin{figure}
	\centering
	\begin{tikzpicture}[scale=0.6, rotate=-45]
		\begin{scope}
			\clip (6,0) -- (15,9) -- (10.5,13.5) -- (1.5,4.5) -- cycle;
			\draw[help lines] (0,0) grid (15,15);
		\end{scope}
		\draw[line width=1ex, gray] (6,0) -- (15,9);
		\draw[ultra thick, blue] (6,1) -- (7,1) -- (7,2) -- (8,2) -- (8,3) -- (9,3) -- (9,5) -- (10,5) -- (10,7) -- (13,7) -- (13,8) -- (14,8) -- (14,9)  -- (11,9) -- (9,9) -- (9,8) -- (6,8) -- cycle;
		\draw[ultra thick, black] (5.8,1.2) -- (6.2,0.8);
		\draw[ultra thick, black] (5.8,0.8) -- (6.2,1.2);
		\draw[ultra thick, black] (13.8,9.2) -- (14.2,8.8);
		\draw[ultra thick, black] (13.8,8.8) -- (14.2,9.2);
		\node [circle, fill=red, inner sep=2.5pt] at (7,1) {};
		\node [circle, fill=red, inner sep=2.5pt] at (8,2) {};
		\node [circle, fill=red, inner sep=2.5pt] at (9,3) {};
		\node [circle, fill=red, inner sep=2.5pt] at (13,7) {};
		\node [circle, fill=red, inner sep=2.5pt] at (14,8) {};
		\draw[very thick, <->] (16,10) -- (12.5,13.5);
		\node at (14.25,11.75) [right=1ex] {\small $h(\pi)=7$};
		\node at (11,4) [below] {\small $v(\pi)=5$};
	\end{tikzpicture}
	\caption{A grafted staircase polygon of length 32, with 5 visits and height 7. The vertices marked with black crosses are fixed at height 1 above the surface.}
	\label{fig:grafted_staircase}
\end{figure}
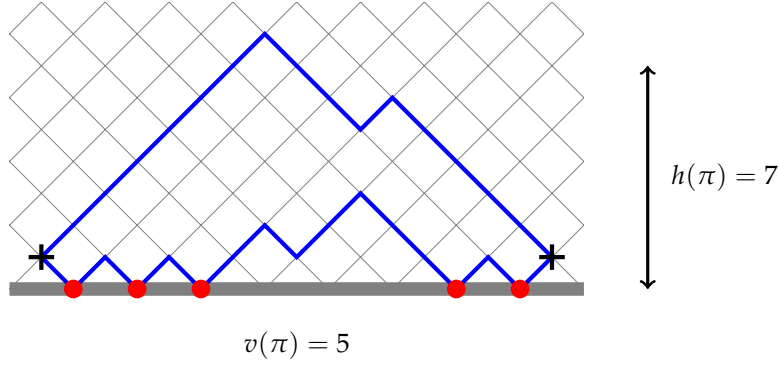

Let $\mathcal P^\textup{G}_{2n}$ be the set of grafted polygons of length $2n$. Similarly to~\eqref{eqn:staircase_PF}, we have the partition function
\begin{equation}\label{eqn:grafted_PF}
	P^\textup{G}_{2n}(a,y) := \sum_{\pi \in \mathcal P^\textup{G}_{2n}} a^{v(\pi)}y^{h(\pi)}
\end{equation}
and the free energy
\begin{equation}\label{eqn:grafted_FE}
	\psi^\textup{G}(a,y) := \lim_{n\to\infty} \frac1{2n} \log P^\textup{G}_{2n}(a,y).
\end{equation}

We will prove the following theorem.
\begin{thm}\label{thm:psiG}
	The free energy $\psi^\textup{G}(a,y)$ exists for all $a,y\in\mathbb R^+$. Moreover, it is given by
	\begin{align}
		\psi^\textup{G}(a,y) &= \textstyle \frac12 \lambda^\textup{P}\left(\sqrt{y}\right) + \frac12 \kappa^\textup{P}(a) \\
		&= \begin{cases} 
			\log 2 &\quad a \leq 2 \text{ and } y \leq 1 \\
			\frac12\log2 + \frac12\log a - \frac14\log(a-1) &\quad a > 2 \text{ and } y \leq 1 \\
			\frac12\log2 + \frac12\log(y+1) - \frac14\log y &\quad a \leq 2 \text{ and } y > 1 \\
			\frac12\log a - \frac14\log(a-1) + \frac12\log(y+1) - \frac14\log y &\quad a > 2 \text{ and } y > 1.
		\end{cases} \label{eqn:psiG}
	\end{align}
\end{thm}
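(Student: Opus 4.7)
The plan is to realise a grafted polygon as a pair of non-intersecting binomial paths, apply the Lindström--Gessel--Viennot (LGV) lemma to decouple them into four independent single-path factors, and then read off the free energy by standard singularity analysis of those factors.

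For $n$ even, the first and last edges of $\pi^+,\pi^-$ are forced (up/down for $\pi^+$ and down/up for $\pi^-$). Stripping these four edges yields a lower binomial path $L$ of length $n-2$ from $(1,0)$ to $(n-1,0)$ in $\Lp$ and an upper binomial path $U$ of length $n-2$ from $(1,2)$ to $(n-1,2)$ in $\Lp$, with $U$ strictly above $L$ in every column. Under this bijection $v(\pi)$ counts the vertices of $L$ on $\y=0$, and $h(\pi)$ is the $\y$-coordinate of the middle vertex of $U$. The $n$ odd case is the same up to shifting both right endpoints up by one. Applying LGV to the non-intersecting pair $(L,U)$ yields
\begin{equation*}
  P^\textup{G}_{2n}(a,y) \;=\; \Phi^L(n;a)\,\Phi^U(n;y) \;-\; \tilde\Phi^L(n;a)\,\tilde\Phi^U(n;y),
\end{equation*}
where $\Phi^L,\Phi^U$ are the independent single-path partition functions with the prescribed endpoints and $\tilde\Phi^L,\tilde\Phi^U$ are the analogues with right endpoints swapped. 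Each term factorises because $a^v$ depends only on $L$ and $y^h$ depends only on $U$.

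Next I would compute the exponential growth rate of each factor. For $\Phi^L(n;a)$ one extracts the relevant coefficient of the known binomial-path generating function $C(t;a,1)$ in \eqref{eqn:binomial_paths_full_gf}, whose dominant singularity gives growth rate $\kappa^\textup{P}(a)$ per step. For $\Phi^U(n;y)$, split $U$ at its middle vertex at height $h$ and write $y^h = (\sqrt{y})^h\cdot(\sqrt{y})^h$; this presents $\Phi^U$ as $\sum_h N_1(h)\,N_2(h)\,y^h$ where each $N_i(h)$ counts half-length binomial paths with terminal height $h$ and has generating function read off from $C(t;1,\sqrt{y})$. A saddle-point analysis then shows that $\Phi^U$ grows exponentially like $e^{(n-2)\lambda^\textup{P}(\sqrt{y})}$. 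The same rates hold for $\tilde\Phi^L$ and $\tilde\Phi^U$ since swapping right endpoints does not affect the exponential order. This gives the upper bound $\psi^\textup{G}(a,y)\le \tfrac12\kappa^\textup{P}(a)+\tfrac12\lambda^\textup{P}(\sqrt{y})$ via $P^\textup{G}_{2n}\le \Phi^L\Phi^U$ (the subtracted LGV term being nonnegative, as it counts a weighted family of swapped-endpoint path pairs).

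The main obstacle is the matching lower bound: the subtracted LGV term a priori has the same exponential rate and could cancel the leading behaviour of the main term. I would rule this out by a direct construction bypassing LGV for this direction: fix a threshold $h_0$ of order $\sqrt n$ and count only those grafted polygons in which $L$ stays at or below $h_0$ and $U$ stays at or above $h_0+2$. The non-intersection condition is then automatic, so $L$ and the two halves of $U$ can be chosen freely subject only to their half-plane constraints, and standard estimates for half-space binomial paths staying above a level of order $\sqrt n$ show that each factor still attains its saddle-point value up to a sub-exponential cost. Combining the two bounds, and substituting the piecewise formulas for $\kappa^\textup{P}$ and $\lambda^\textup{P}$, produces the four-case expression \eqref{eqn:psiG}.
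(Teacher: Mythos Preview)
Your LGV identity is not valid here. The tail-swap involution underlying LGV is weight-preserving only when the weight of a path depends solely on its steps, not on which source--sink pair it joins. In your setup the weight $a^{v(L)}y^{h(U)}$ attaches the $a$-contribution to the path issuing from the lower source and the $y$-contribution to the \emph{middle} vertex of the path issuing from the upper source; after swapping tails at the first crossing both of these quantities change, so the involution is not weight-preserving. A direct check at $2n=8$ already shows the failure: there is a single grafted polygon, giving $P^\textup{G}_{8}(a,y)=a^2y^3$, whereas your determinant evaluates to $\Phi^L\Phi^U-\tilde\Phi^L\tilde\Phi^U=a^2(y^3+y)-a\cdot y=a^2y^3+ay(a-1)$, which agrees only at $a=1$. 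This is precisely why the paper does not apply a single LGV step to the whole pair: it first cuts both paths at the middle column, so that the height $h$ becomes an \emph{endpoint} datum for each half, and then uses the constant-term (Bethe-ansatz) formula for nonintersecting pairs with surface weights (\cref{lem:even_grafted_from_paths,lem:odd_grafted_from_paths}) to obtain an exact expression whose asymptotics are analysed region by region in \cref{sec:grafted_asymps}.

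That said, at the level of the free energy your strategy can be salvaged without the LGV identity. The upper bound $P^\textup{G}_{2n}\le\Phi^L\Phi^U$ follows simply by dropping the non-intersection constraint (no determinant needed), and your growth-rate computations for the two factors are correct. The lower bound, however, also needs repair: the upper path $U$ starts and ends at height $2$, so it cannot ``stay at or above $h_0+2$'' for any $h_0\ge1$, and as written your restricted family is empty. The standard fix is to spend $O(h_0)=O(\sqrt{n})$ forced steps at each end of $U$ to climb to height $h_0+2$ before letting it run freely above that level, while $L$ is constrained below $h_0$; the connector cost is sub-exponential and non-intersection is then automatic. With those two corrections your more elementary route does give $\psi^\textup{G}(a,y)=\tfrac12\kappa^\textup{P}(a)+\tfrac12\lambda^\textup{P}(\sqrt{y})$, though without the critical-exponent information that the paper's exact asymptotics provide.
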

We will denote the four regions in the $(a,y)$-quarter-plane as $G_\textup{I}, G_\textup{II}, G_\textup{III}$ and $G_\textup{IV}$ respectively. These correspond to free ($\mathbf V = \mathbf H = 0$), adsorbed ($\mathbf V>0$), ballistic ($\mathbf H>0$) and ``mixed'' ($\mathbf V>0$, $\mathbf H>0$) phases respectively.

\subsection{Centred staircase polygons}\label{ssec:centred_model}

We will say that a polygon $\pi\in\mathcal P^\textup{S}_{2n}$ is \emph{centred} if the vertex $\pi^-_{\lfloor n/2 \rfloor}$ has $\y$-coordinate 0. See \cref{fig:centred_staircase}. Let $\mathcal P^\textup{C}_{2n}$ be the set of centred polygons of length $2n$. We again have a partition function
\begin{equation}\label{eqn:centred_PF}
	P^\textup{C}_{2n}(a,y) := \sum_{\pi \in \mathcal P^\textup{C}_{2n}} a^{v(\pi)}a^{h(\pi)}
\end{equation}
and free energy
\begin{equation}\label{eqn:centred_FE}
	\psi^\textup{C}(a,y) := \lim_{n\to\infty} \frac1{2n} \log P^\textup{C}_{2n}(a,y).
\end{equation}
\begin{thm}\label{thm:psiC}
	The free energy $\psi^\textup{C}(a,y)$ exists for all $a,y\in\mathbb R^+$. Moreover, it is given by
	\begin{align}
		\psi^\textup{C}(a,y) &= \textstyle \frac12 \lambda^\textup{P}\left(\sqrt{y}\right) + \frac12 \max\left\{\lambda^\textup{P}\left(\sqrt{y}\right),\kappa^\textup{P}(a)\right\} \\
		&=\begin{cases} 
			\log 2 &\quad a \leq 2 \text{ and } y \leq 1 \\
			\frac12\log2 + \frac12\log a - \frac14\log(a-1) &\quad a > 2 \text{ and } y \leq 1 \\
			\log(y+1) - \frac12\log y &\quad y > 1 \text{ and } a \leq y + 1 \\
			\frac12\log a - \frac14\log(a-1) + \frac12\log(y+1) - \frac14\log y &\quad y > 1 \text{ and } a > y + 1.
		\end{cases}\label{eqn:psiC}
	\end{align}
\end{thm}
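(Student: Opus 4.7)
The plan is to pin down $\psi^\textup{C}(a,y)$ by sandwiching $\tfrac{1}{2n}\log P^\textup{C}_{2n}(a,y)$ between matching upper and lower bounds. Since $\mathcal P^\textup{C}_{2n}\subseteq\mathcal P^\textup{S}_{2n}$, one has $P^\textup{C}_{2n}\leq P^\textup{S}_{2n}$, so the upper bound $\limsup_{n\to\infty}\tfrac{1}{2n}\log P^\textup{S}_{2n}(a,y)\leq \tfrac12\lambda^\textup{P}(\sqrt y)+\tfrac12\max\{\lambda^\textup{P}(\sqrt y),\kappa^\textup{P}(a)\}$ to be obtained in \cref{sec:ub} transfers directly. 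The substantive work is then a matching lower bound on $\liminf\tfrac{1}{2n}\log P^\textup{C}_{2n}(a,y)$, which I would establish by explicit construction in each of the four regions of the phase diagram; this is the content of \cref{sec:centred_lbs}.

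The key structural idea is a decomposition of a centred polygon at its surface vertex $(m,0)=\pi^-_{\lfloor n/2\rfloor}$, where $m:=\lfloor n/2\rfloor$. Each of the two resulting halves consists of a pair of non-intersecting binomial paths of length $m$ from a common endpoint $(0,j)$ to $(m,h)$ on the top and $(m,0)$ on the bottom. Distributing the midpoint weight symmetrically as $y^h=\sqrt y^{\,h}\cdot\sqrt y^{\,h}$ makes $P^\textup{C}_{2n}(a,y)$ equivalent, up to polynomial-in-$n$ matching corrections across the middle column, to the square of a half-polygon partition function $\widetilde Q_m(a,\sqrt y)$. Summing first over $h$ for the top half-path, via the identity $\sum_h \sqrt y^{\,h}\binom{m}{(m+h-j)/2}=\sqrt y^{\,j-m}(1+y)^m$, yields a prefactor $(\sqrt y+1/\sqrt y)^m$ together with a residual $\sqrt y^{\,j}$ attached to the common endpoint; summing next over $j$ and $v$ for the bottom half-path, after a reversal bijection sending the path $(0,j)\to(m,0)$ to a path $(0,0)\to(m,j)$, produces precisely the binomial-path partition function $C_m(a,\sqrt y)$ of \cref{ssec:binomial_paths}. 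Hence
\[
\widetilde Q_m(a,\sqrt y) \sim (\sqrt y+1/\sqrt y)^m\cdot C_m(a,\sqrt y),
\]
and combining this with $C_m(a,\sqrt y)\sim e^{m\,\psi^\textup{P}(a,\sqrt y)}$, the identity $\log(\sqrt y+1/\sqrt y)=\lambda^\textup{P}(\sqrt y)$ (valid for $y\geq 1$), and \eqref{eqn:psiP_maxform}, then dividing by $2n=4m$, recovers the claimed value of $\psi^\textup{C}(a,y)$.

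The hard part will be justifying this asymptotic factorisation rigorously. The top and bottom half-paths in each half-polygon are coupled by the non-intersection condition $\pi^+_i>\pi^-_i$ that the sum-first-over-$h$ computation ignores; I would handle this either by a Lindström--Gessel--Viennot determinant computation showing that the non-intersecting count differs from the free product of path counts by only a polynomial-in-$m$ factor, or by a direct surgery argument (splicing a narrow separating bridge near the middle column). The ballistic regime $\lambda^\textup{P}(\sqrt y)>\kappa^\textup{P}(a)$ is the most delicate case: one needs centred polygons whose endpoints are simultaneously lifted to heights of order $n$ (so that the bottom half-path can supply its own $\lambda^\textup{P}(\sqrt y)$ contribution), while preserving both non-intersection and the bottom midpoint on the surface; this is presumably the role of the bridge-based construction deferred to \cref{app:bridges}. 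For $y<1$ the top half-path is downward-biased, so the nonneg constraint is active, and standard reflection-principle estimates will be needed to clamp the effective top-path rate down from $\log(\sqrt y+1/\sqrt y)$ to $\log 2=\lambda^\textup{P}(\sqrt y)$, keeping the final answer consistent with the theorem.
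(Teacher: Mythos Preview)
Your overall architecture (sandwich $P^\textup{C}_{2n}$ between matching bounds) is the paper's, and your upper bound via $P^\textup{C}_{2n}\le P^\textup{S}_{2n}$ together with \cref{sec:ub} is valid, though the paper in fact proves the upper bound for centred polygons directly in \cref{ssec:centred_ub}: it relaxes only the non-intersection constraint and bounds $P^\textup{C}_{2n}$ by the explicit four-path sum $T_{2n}(a,y)$ of \eqref{eqn:Tn_even}--\eqref{eqn:Tn_odd}, whose growth rate is read off from \cref{lem:single_path}. Routing through $P^\textup{S}$ and $S_{2n}$ works but is one layer more than needed.

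The lower bound is where your route genuinely diverges. Your half-polygon factorisation $\widetilde Q_m(a,\sqrt y)\sim(\sqrt y+1/\sqrt y)^m\,C_m(a,\sqrt y)$ is an appealing heuristic and does predict the right free energy, but as you note, it discards the non-intersection coupling between the upper and lower half-paths; promoting it to a lower bound via LGV determinants or a surgery argument is plausible but is not what the paper does, and you have not carried it out. The paper instead uses two concrete subfamilies. In regions $G_\textup{I},G_\textup{II},G_\textup{IV}$ it takes \emph{grafted-and-centred} polygons (\cref{lem:GC_comparison}), whose partition function \eqref{eqn:PGC_exact} is computed exactly by the constant-term machinery of \cref{sec:pairs-of-paths} and whose asymptotics follow from the same analysis as \cref{sec:grafted_asymps}; since $\psi^\textup{GC}=\psi^\textup{G}$ is already established by \cref{thm:psiG}, the lower bound is immediate. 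Only in region $G_\textup{III}$ does the paper resort to the four-bridge construction (\cref{lem:bridge_lb}, with \cref{lem:bridge_asymp} proved in \cref{app:bridges}), which matches what you anticipated for the ballistic regime. So the paper leverages the exact grafted calculation rather than arguing probabilistically that non-intersection is subexponentially cheap.

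One specific wrinkle in your sketch: for $y<1$ your unconstrained top-path sum gives $(\sqrt y+1/\sqrt y)^m>2^m$, so ``clamping down via reflection'' is an upper-bound correction, not something that helps the lower bound. In that regime you still need to \emph{exhibit} enough centred polygons with small $h$, and your factorisation does not do this; the paper's grafted-and-centred subclass handles it cleanly.
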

We will denote these four regions in the $(a,y)$-quarter-plane as $C_\textup{I}, C_\textup{II}, C_\textup{III}$ and $C_\textup{IV}$ respectively. Note that $C_\textup{I} = G_\textup{I}$ and $C_\textup{II} = G_\textup{II}$.

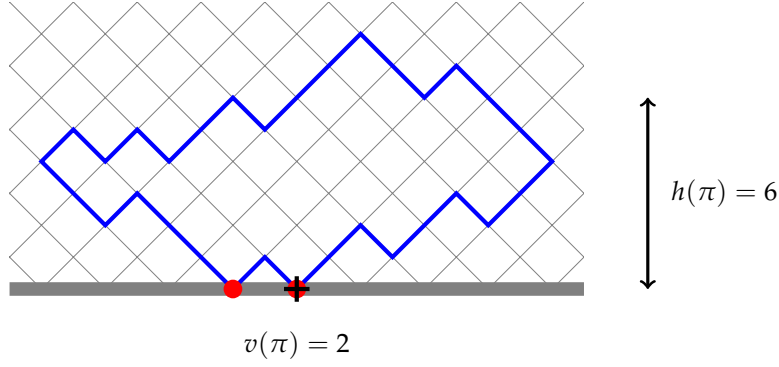
\begin{figure}
	\centering
	\begin{tikzpicture}[scale=0.6, rotate=-45]
		\begin{scope}
			\clip (6.5,0.5) -- (15.5,9.5) -- (11,14) -- (2,5) -- cycle;
			\draw[help lines] (0,0) grid (16,16);
		\end{scope}
		\draw[line width=1ex, gray] (6.5,0.5) -- (15.5,9.5);
		\draw[ultra thick, blue] (5,3) -- (7,3) -- (7,4) -- (10,4) -- (10,5) -- (11,5) -- (11,7) -- (12,7) -- (12,9) -- (13,9) -- (13,11) -- (10,11) -- (10,10) -- (8,10) -- (8,7) -- (7,7) -- (7,5) -- (6,5) -- (6,4) -- (5,4) -- cycle;
		\node [circle, fill=red, inner sep=2.5pt] at (10,4) {};
		\node [circle, fill=red, inner sep=2.5pt] at (11,5) {};
		\draw[ultra thick, black] (10.8,5.2) -- (11.2,4.8);
		\draw[ultra thick, black] (10.8,4.8) -- (11.2,5.2);
		\draw[very thick, <->] (16.5,10.5) -- (13.5,13.5);
		\node at (15,12) [right=1ex] {\small $h(\pi)=6$};
		\node at (11.5,4.5) [below]{\small $v(\pi)=2$};
	\end{tikzpicture}
	\caption{A centred staircase polygon of length 32, with 2 visits and height 6. The vertex marked with a black cross is fixed in the surface.}
	\label{fig:centred_staircase}
\end{figure}

The main theorem of this paper is the following, which essentially says that the restriction from staircase polygons to centred polygons is weak enough so as to have no effect on the free energy.
\begin{thm}\label{thm:psi+}
	The free energy $\psi^\textup{S}(a,y)$ exists for all $a,y\in\mathbb R^+$. Moreover, 
	\begin{equation}\label{eqn:psi+psiC}
		\psi^\textup{S}(a,y) = \psi^\textup{C}(a,y).
	\end{equation}
\end{thm}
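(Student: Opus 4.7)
The plan is to sandwich $\psi^{\textup{S}}(a,y)$ between two quantities, each equal to $\psi^{\textup{C}}(a,y)$; the matching bounds then simultaneously establish the existence of the limit and its value. The lower bound is the easy direction, via containment. Every centred polygon of length $2n$ automatically lies in $\mathcal{P}^{\textup{S}}_{2n}$: the condition that $\pi^-_{\lfloor n/2\rfloor}$ has $\y = 0$ forces at least one visit, and the other defining conditions (lying in $\Lp$, leftmost vertex on $\x = 0$) are already shared with the staircase-polygon definition. Hence $P^{\textup{S}}_{2n}(a,y) \geq P^{\textup{C}}_{2n}(a,y)$ termwise, and \cref{thm:psiC} yields
\[\liminf_{n\to\infty} \frac{1}{2n}\log P^{\textup{S}}_{2n}(a,y) \;\geq\; \psi^{\textup{C}}(a,y).\]

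For the upper bound, my plan is to decompose an arbitrary $\pi \in \mathcal{P}^{\textup{S}}_{2n}$ at its parametric midpoint. Splitting both $\pi^+$ and $\pi^-$ at index $\lfloor n/2\rfloor$ represents $\pi$ as two ``half-polygon'' quadruples of nonintersecting binomial paths, meeting at midpoint heights $h$ (upper) and $j$ (lower). The weight $y^{h(\pi)}$ is tracked at the upper midpoint, while the visit weights $a^{v(\pi)}$ are split between the two halves. Summing over $j$ and $h$, and relaxing the nonintersection constraint between $\pi^+$ and $\pi^-$ (an upper-bound move), one obtains a product of two independent partition functions: an upper-path contribution (a pair of binomial half-paths meeting at a height-weighted midpoint) and a lower-path contribution (a pair of binomial half-paths in $\Lp$ meeting at height $j$, subject to the constraint that at least one of them visits $\y = 0$). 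The upper-path contribution grows at rate $\lambda^{\textup{P}}(\sqrt{y})$ -- the square root appearing because the height weight $y$ is applied to a single shared vertex rather than once per step of each half -- while for $j = 0$ the lower-path contribution reduces exactly to the lower-path piece of the centred model, whose growth rate is $\max\{\lambda^{\textup{P}}(\sqrt{y}), \kappa^{\textup{P}}(a)\}$ by \cref{thm:psiC}.

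The main obstacle, and hence the substance of the upper-bound argument, is controlling the $j > 0$ contributions to the lower-path partition function: here the midpoint of $\pi^-$ sits above the surface, but the surface-visit requirement is still imposed globally on the combined lower path. I would argue that the exponential rate of this constrained sum, uniformly in $j$, is no larger than in the $j = 0$ case, so that the polynomial-in-$n$ sum over $j$ contributes only subexponentially. The natural tools are a generating-function calculation analogous to the one used for $\psi^{\textup{C}}$, or a path-surgery/reflection argument that shifts a visit onto the midpoint without cost. Combining the two half-contributions then gives
\[\limsup_{n\to\infty} \frac{1}{2n}\log P^{\textup{S}}_{2n}(a,y) \;\leq\; \tfrac{1}{2}\lambda^{\textup{P}}(\sqrt{y}) + \tfrac{1}{2}\max\bigl\{\lambda^{\textup{P}}(\sqrt{y}),\kappa^{\textup{P}}(a)\bigr\} \;=\; \psi^{\textup{C}}(a,y),\]
which matches the lower bound and proves both the existence of $\psi^{\textup{S}}(a,y)$ and the identity~\eqref{eqn:psi+psiC}.
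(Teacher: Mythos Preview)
Your lower bound via containment is exactly what the paper does. The upper-bound strategy --- relax the nonintersection constraint between $\pi^+$ and $\pi^-$ --- is also the paper's starting point. However, the step where you claim this yields ``a product of two independent partition functions'' is where the argument breaks.

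After relaxing nonintersection, the upper and lower paths are still coupled through the shared endpoints $\pi^+_0=\pi^-_0$ and $\pi^+_n=\pi^-_n$. The resulting bound is a sum over endpoint heights $i_1,i_2$ of products, not a product of sums. If you additionally relax the endpoint constraint to force a genuine product, the upper-path piece $\sum_{i_1^+,i_2^+,h} t^1_{\lfloor n/2\rfloor}(i_1^+,h)\,t^1_{\lceil n/2\rceil}(h,i_2^+)\,y^h$ diverges for $y>1$, since $i_1^+$ (and hence $h$) is no longer bounded. Conversely, your stated rate $\max\{\lambda^{\textup{P}}(\sqrt{y}),\kappa^{\textup{P}}(a)\}$ for the lower-path piece cannot be right either: the lower path carries no $y$-weight, so on its own its exponential rate is just $\kappa^{\textup{P}}(a)$, and the product would then give $\tfrac12\lambda^{\textup{P}}(\sqrt{y})+\tfrac12\kappa^{\textup{P}}(a)=\psi^{\textup{G}}(a,y)$ rather than $\psi^{\textup{C}}(a,y)$ --- strictly smaller in region $C_\textup{III}$, contradicting your own lower bound. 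The $y$-dependence you want in the ``lower'' half can only enter through the endpoint coupling, which is precisely what the factorization discards.

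The paper avoids this by \emph{not} factoring. It keeps the shared endpoints $i_1,i_2$, introduces an additional parameter $q$ for the position of the leftmost surface visit of the lower path (turning the global visit constraint into a local one), and writes the relaxed partition function $S_{2n}(a,y)$ as an explicit multi-sum of single-path weights $t_n$ and $t^1_n$. The growth rate is then read off directly by locating the dominant values of $q,k,i_1,i_2,w$ via Stirling, exactly as was done for $T_{2n}(a,y)$ in the centred case. Your ``main obstacle'' (the $j>0$ contributions) is subsumed into that analysis, but the real work is the joint optimisation over all variables including the endpoints, not a separate treatment of upper and lower paths.
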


We note here that the problem of adsorbing staircase polygons has been studied in the past by Janse van Rensburg~\cite{vanRensburg1999Adsorbing} (see also \cite[Section~5.6]{vanRensburg2000Statistical}). One result obtained there is that the free energy for a certain subclass of staircase polygons (those with leftmost vertex at ordinate $\y=1$; one might call them ``semi-grafted''), the free energy $\kappa^\textup{SG}(a)$ is given by
\begin{align}
\kappa^\textup{SG}(a) &= \textstyle\frac12\log2 + \frac12\kappa^\textup{P}(a) \\
&= \begin{cases} \log 2 &\quad a \leq 2 \\ \frac12 \log2 + \frac12 \log a-\frac14\log(a-1) &\quad a>2.\end{cases}
\end{align}
Observe then that for $y\leq 1$,
\[\psi^\textup{G}(a,y) = \psi^\textup{C}(a,y) = \psi^\textup{S}(a,y) = \kappa^\textup{SG}(a).\]

In the next section we will employ some of the ideas used in~\cite{vanRensburg1999Adsorbing,vanRensburg2000Statistical} in order to generalise to the adsorbing and pulled case.

\section{Pairs of paths}\label{sec:pairs-of-paths}

A key tool in computing the partition functions and free energies defined in the previous section will be a formulation of staircase polygons -- with given length, height and number of visits -- in terms of pairs of nonintersecting lattice paths. Such a technique is well known -- see for example~\cite[Section~5.6]{vanRensburg2000Statistical}. Indeed, we will take as our starting point a result from~\cite{vanRensburg2000Statistical}.

Let $P$ be a configuration of two nonintersecting binomial paths of length $n$ in $\Lp$, with the lower path starting at $(0,j_0-l_0)$ and finishing at $(n,j-l)$, and the upper path starting at $(0,j_0+l_0+2)$ and finishing at $(n,j+l+2)$. Furthermore, suppose the lower path accumulates a weight $a$ for every vertex in the boundary $\y=0$. See \cref{fig:two_paths}. Let $r_n(j_0,l_0;j,l)$ be the total weight of all such configurations.

\begin{figure}
	\centering
	\begin{tikzpicture}[scale=0.6, rotate=-45]
		\begin{scope}
			\clip (6.5,0.5) -- (15.5,9.5) -- (11,14) -- (2,5) -- cycle;
			\draw[help lines] (0,0) grid (16,16);
		\end{scope}
		\draw[line width=1ex, gray] (6.5,0.5) -- (15.5,9.5);
		\draw[ultra thick, blue] (7,1) -- (7,2) -- (8,2) -- (8,3) -- (9,3) -- (9,5) -- (10,5) -- (10,7) -- (13,7) -- (13,8) -- (14,8) -- (14,10);
		\draw[ultra thick, blue] (5,3) -- (6,3) -- (6,4) -- (7,4) -- (7,6) -- (8,6) -- (8,7) -- (9,7) -- (9,10) -- (10,10) -- (10,12) -- (11,12) -- (11,13);		
		\node [circle, fill=red, inner sep=2.5pt] at (7,1) {};
		\node [circle, fill=red, inner sep=2.5pt] at (8,2) {};
		\node [circle, fill=red, inner sep=2.5pt] at (9,3) {};
		\node [circle, fill=red, inner sep=2.5pt] at (13,7) {};
		\node [circle, fill=red, inner sep=2.5pt] at (14,8) {};
		\node at (11.5,13.5) [right] {\small $j+l+2=8$};
		\node at (14.5,10.5) [right] {\small $j-l = 2$};
		\node at (6.5,0.5) [left] {\small $j_0-l_0 = 0$};
		\node at (4.5,2.5) [left] {\small $j_0+l_0+2 = 4$};
	\end{tikzpicture}
	\caption{Two paths of length 16, starting at $(0,0)$ and $(0,4)$ and finishing at $(16,2)$ and $(16,8)$, with the bottom path accumulating visits to the surface. This pair contributes a weight of $a^5$ to $r_{16}(1,1;4,2)$.}
	\label{fig:two_paths}
\end{figure}
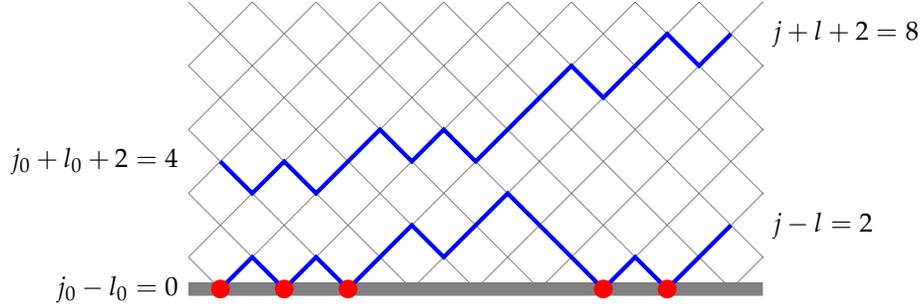

We denote by $\mathrm{CT}_\mathcal{S}[X]$ the constant term of the Laurent series $X$, with respect to the set of variables $\mathcal S$. Also let $\overline x = 1/x$.

\begin{lem}[{\cite[Eqn.~(5.168)]{vanRensburg2000Statistical}}] The partition function $r_n(j_0,l_0;j,l)$ is given by
	\begin{multline}\label{eqn:rn_CTform}
		r_n(j_0,l_0;j,l) = \mathrm{CT}_{\{\z_1,\z_2\}} \left[\Lambda^n \z_1^j \z_2^l \left(\oz_1^{j_0}\oz_2^{l_0} + \mathfrak{T}_1\oz_1^{l_0}\oz_2^{j_0} - \oz_1^{j_0}\z_2^{l_0+2} - \mathfrak{T}_2\oz_1^{l_0}\z_2^{j_0+2} \right.\right. \\ \left.\left. 
		- \mathfrak{T}_1\mathfrak{T}_2\z_1^{j_0+2}\oz_2^{l_0}  - \mathfrak{T}_1\z_1^{l_0+2}\oz_2^{j_0} + \mathfrak{T}_1\mathfrak{T}_2\z_1^{j_0+2}\z_2^{l_0+2} + \mathfrak{T}_2\z_1^{l_0+2}\z_2^{j_0+2}\right)\right],
	\end{multline}
	where
	\[\Lambda = \z_1 + \oz_1 + \z_2 + \oz_2,\]
	\[\mathfrak{T}_1 = - \frac{\Lambda - a\left(\oz_1 + \z_2\right)}{\Lambda - a\left(\z_1 + \oz_2\right)} \qquad \text{and}\qquad \mathfrak{T}_2 = - \frac{\Lambda - a\left(\oz_1 + \oz_2\right)}{\Lambda - a\left(\z_1 + \z_2\right)}.\]
\end{lem}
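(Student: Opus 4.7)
The plan is to reduce the pair-of-paths problem to a single two-dimensional walk confined to a wedge, and then apply a reflection-principle argument together with the kernel method to handle the surface weight.

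First I would introduce the change of variables $u = (\y^+ + \y^-)/2 - 1$, $v = (\y^+ - \y^-)/2 - 1$, where $\y^\pm$ denotes the $\y$-coordinate of a vertex of the upper (respectively lower) path. Since at each step both $\y^+$ and $\y^-$ change independently by $\pm 1$, the pair $(u,v)$ takes one of the four nearest-neighbour unit steps $(\pm 1,0)$ or $(0,\pm 1)$, so that the per-step transfer is $\Lambda = \z_1 + \oz_1 + \z_2 + \oz_2$ (here $\z_1$ tracks $u$ and $\z_2$ tracks $v$). The endpoints become $(u_0,v_0) = (j_0,l_0)$ and $(u_n,v_n) = (j,l)$; the half-plane constraint $\y^- \geq 0$ is equivalent to $u \geq v$; and the non-crossing condition $\y^+ > \y^-$ is equivalent to $v \geq 0$. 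Thus $r_n(j_0,l_0;j,l)$ counts $n$-step walks from $(j_0,l_0)$ to $(j,l)$ confined to the closed wedge $\{0 \leq v \leq u\}$, weighted by $a$ at each visit to the diagonal $u = v$.

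Second I would apply a wedge reflection principle. The group $W$ generated by the discrete reflections $R : (u,v) \mapsto (u,-v-2)$ (through the barrier $v=-1$ associated with $v \geq 0$) and $S : (u,v) \mapsto (v,u)$ (across the diagonal $u = v$) is dihedral of order $8$, and its eight images of $(j_0,l_0)$ are precisely the eight exponent pairs appearing in~\eqref{eqn:rn_CTform}; the ``$+2$'' shifts arise because the $R$-barrier sits at $v = -1$ rather than $v = 0$. In the unweighted case $a = 1$, a Gessel--Zeilberger-style inclusion-exclusion expresses the number of wedge-confined walks as $\sum_{g \in W} (-1)^{\ell(g)}\, \mathrm{CT}\bigl[\Lambda^n \z_1^{j - u'} \z_2^{l - v'}\bigr]$, where $(u',v') = g(j_0,l_0)$ and $\ell(g)$ is the Coxeter length in the generators $R,S$. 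This reproduces~\eqref{eqn:rn_CTform} with $\mathfrak T_1, \mathfrak T_2$ both specialised to $-1$, so that for example the term $+\mathfrak T_1 \oz_1^{l_0}\oz_2^{j_0}$ contributes the $-1$ sign expected for the single-$S$ element.

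Third I would reintroduce the surface weight $a$. For $a \neq 1$, each use of the diagonal reflection $S$ must be replaced by a rational ``kernel'' factor that sums a geometric series over consecutive contacts of the walk with the diagonal. Decomposing walks according to their excursions between diagonal visits yields a small linear system whose solution is a pair of ratios, matching the stated $\mathfrak T_1$ and $\mathfrak T_2$; the distinction between them comes from the two inequivalent directions in which the walk can leave a weighted diagonal contact. The number of $S$-letters in the reduced word of each $g \in W$ then determines whether the corresponding term acquires zero, one, or two $\mathfrak T$-factors, with the index in the one-$S$ case dictated by the relative position of $S$ and $R$.

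The main obstacle is establishing the explicit form of the kernels $\mathfrak T_1, \mathfrak T_2$ and correctly pairing them with the group elements: the underlying kernel-method calculation is standard, but the sign and index bookkeeping is delicate, and one must verify that the two-by-two system for the boundary corrections solves precisely to the stated ratios $-(\Lambda - a(\cdot))/(\Lambda - a(\cdot))$. The full derivation is carried out in~\cite[Sec.~5.6]{vanRensburg2000Statistical}, which proves this exact formula.
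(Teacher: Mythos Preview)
Your proposal is correct and matches the paper's own treatment: the paper does not prove the lemma but merely sketches that one writes a recurrence for $r_n$ and solves it via a ``general Bethe ansatz'' together with Cauchy's integral formula, deferring to~\cite[Sec.~5.6]{vanRensburg2000Statistical} for the details. Your change of variables to a single walker in the $B_2$ wedge, the dihedral reflection group producing the eight terms, and the kernel/scattering factors $\mathfrak T_1,\mathfrak T_2$ at the weighted diagonal are precisely the ingredients of that Bethe ansatz calculation stated in more combinatorial language, so the two routes coincide.
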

This result is obtained by finding a recurrence relation satisfied by $r_n$. This is straightforward: one can construct pairs of paths of length $n$ by appending steps to pairs of paths of length $n-1$, taking care to make sure that the two paths do not intersect and that the bottom path does not step below the surface, as well as accounting for the weight $a$ when the bottom path touches the surface. This recurrence can then be solved with a general Bethe ansatz, where Cauchy's integral formula allows one to write a contour integral as the constant term of a Laurent series.

The most obvious way to obtain a staircase polygon from a pair of binomial paths is to take $l_0 = l = 0$, so that the two paths start and end as near to each other as possible. The addition of two edges on the left and two edges on the right then joins the paths up to form a polygon. This would correctly account for the $a$ weight in the partition function; however, it gives us no information about the height of the resulting polygon. 

We will thus use a different strategy: taking \emph{two} pairs of paths of length $n$, both with $l_0=0$ and with matching values of $j$ and $l$, then reflecting one of the pairs through the vertical axis, gluing the two pairs together, and finally adding two edges at the left and right to close up the whole thing into a polygon. The number of visits for the resulting polygon is the sum of the numbers of visits of the two pairs (unless $j=l$, in which case one visit has been counted twice), while the height of the resulting polygon is $j+l+2$.

We thus set $l_0=0$ in~\eqref{eqn:rn_CTform}. Following~\cite{vanRensburg2000Statistical}, we perform the change of variables $\z_1 = \e_1\e_2$ and $\z_2 = \e_1\oe_2$. Then $\Lambda = (\e_1+\oe_1)(\e_2+\oe_2)$, and if we define
\[\mathcal H(\eta) = - \frac{\eta^2 - (a-1)}{1 - (a-1)\eta^2}\]
then $\mathfrak{T}_1 = \mathcal H(\e_2)$ and $\mathfrak{T}_2 = \mathcal H(\e_1)$. (Note that this corrects~\cite[Eqn.~(5.170)]{vanRensburg2000Statistical}.) We will write $\mathcal{H}_1 = \mathcal H(\e_1)$ and $\mathcal{H}_2 = \mathcal H(\e_2)$ for short. Then
\begin{multline}\label{eqn:rn_CTform_changed}
	r_n(j_0,0;j,l) = \mathrm{CT}_{\{\e_1,\e_2\}}\left[\Lambda^n \e_1^{j+l}\e_2^{j-l} \left(\oe_1^{j_0}\oe_2^{j_0} + \mathcal{H}_2\oe_1^{j_0}\e_2^{j_0} - \oe_1^{j_0-2}\oe_2^{j_0+2} - \mathcal{H}_1\e_1^{j_0+2}\oe_2^{j_0+2} \right.\right. \\ \left. \left. 
	-\mathcal{H}_1\mathcal{H}_2\e_1^{j_0+2}\e_2^{j_0+2} - \mathcal{H}_2\oe_1^{j_0-2}\e_2^{j_0+2} + \mathcal{H}_1\mathcal{H}_2 \e_1^{j_0+4}\e_2^{j_0} + \mathcal{H}_1\e_1^{j_0+4}\oe_2^{j_0} \right)\right].
\end{multline}

\subsection{Grafted polygons via pairs of paths}\label{ssec:grafted_paths}

Let $s_n(k,m) = r_n\left(0,0;\frac{k+m}{2},\frac{k-m}{2}\right)$, with the specialisation $s^1_n(k,m) = s_n(k,m)|_{a=1}$. This can be extracted directly from~\eqref{eqn:rn_CTform_changed}; it is~\cite[Eqn.~(5.172)]{vanRensburg2000Statistical}
\begin{align}
	s^1_n(k,m) &= \mathrm{CT}_{\{\e_1,\e_2\}}\left[(\e_1+\oe_1)^n(\e_2+\oe_2)^n \e_1^{k}\e_2^{m}(1-\e_1^2)(1-\e_2^2)(\e_1^2-\e_2^2)(\e_1^2-\oe_2^2)\right] \notag \\
	&= \frac{(k+3)(m+1)(k-m+2)(k+m+4)}{4(n+1)(n+2)(n+3)^2} \binom{n+3}{\frac12(n+k+6)}\binom{n+3}{\frac12(n+m+4)}. \label{eqn:snl_a1_exact}
\end{align}
The general $a$ case is then obtained by expanding $\mathcal{H}_1$ and $\mathcal{H}_2$ in~\eqref{eqn:rn_CTform_changed} as power series; we have~\cite[Eqn.~(5.173)]{vanRensburg2000Statistical}
\begin{equation}\label{eqn:snl_exact}
	s_n(k,m) = a \sum_{w=0}^n (a-1)^w \sum_{p=0}^w s_n^1(k+2p,m+2w-2p).
\end{equation}
Note that $s_n(k,m)$ should only be nonzero when $0 \leq m \leq k \leq n$ and $k \equiv m \equiv n \,(\text{mod }2)$; in other words, the binomial coefficients should be taken to be zero if either argument is not a natural number.

The arguments outlined above give the following lemma.
\begin{lem}\label{lem:even_grafted_from_paths}
	For even $n$, the partition function $P^\textup{G}_{2n}(a,y)$ is given by
	\begin{equation}\label{eqn:even_grafted_from_paths}
		P^\textup{G}_{2n}(a,y) = \frac1a \sum_{k=0}^{\frac{n-2}{2}} \left(s_{\frac{n-2}{2}}(k,0)\right)^2y^{k+2} + \sum_{k=0}^{\frac{n-2}{2}}\sum_{m=1}^k \left(s_{\frac{n-2}{2}}(k,m)\right)^2y^{k+2}.
	\end{equation}
\end{lem}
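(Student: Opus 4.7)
The plan is to establish a bijective decomposition of a grafted polygon into an ordered pair of ``halves,'' each of which is itself a pair of nonintersecting binomial paths, and then to apply the constant-term formula for $r_n$ from the previous section. For $\pi \in \mathcal P^\textup{G}_{2n}$ with $n$ even, the grafting condition pins the leftmost vertex at $(0,1)$ and the rightmost at $(n,1)$, forcing $\pi^+_1 = (1,2)$, $\pi^-_1 = (1,0)$, $\pi^+_{n-1} = (n-1,2)$ and $\pi^-_{n-1} = (n-1,0)$. Deleting these four boundary edges leaves a pair of nonintersecting paths, each of length $n-2$, running from $(1,0),(1,2)$ to $(n-1,0),(n-1,2)$; only the lower path can touch $\y = 0$.

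Next, I split this trimmed pair at the vertical line $\x = n/2$, where the polygon's middle vertices $\pi^+_{n/2}$ and $\pi^-_{n/2}$ lie. Let $h$ be the $\y$-coordinate of $\pi^+_{n/2}$ (which is, by definition, $h(\pi)$) and let $h'$ be the $\y$-coordinate of $\pi^-_{n/2}$; nonintersection and parity force $h \equiv h' \pmod 2$, $h' \geq 0$ and $h \geq h' + 2$. The left half, after translating its starting $\x$-coordinate to $0$, is a pair of paths of length $(n-2)/2$ counted (with surface weight $a$) by $r_{(n-2)/2}(0,0;j,l)$ with $j+l+2 = h$ and $j-l = h'$; setting $k = h-2$ and $m = h'$, this is exactly $s_{(n-2)/2}(k,m)$. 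The right half, reflected through $\x = n/2$ (which preserves both the up/down step structure of binomial paths and the set of surface vertices on the lower path) and then translated, gives an independent pair with the same endpoint data, contributing a second factor $s_{(n-2)/2}(k,m)$.

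Multiplying and weighting by $y^h = y^{k+2}$ yields $s_{(n-2)/2}(k,m)^2 \, y^{k+2}$. This correctly accounts for the height and for every surface visit of $\pi$, except for one subtlety: the middle vertex $\pi^-_{n/2}$ is simultaneously the terminal vertex of the lower path of the left half and the initial vertex of the lower path of the right half, so when $h' = m = 0$ it contributes a factor of $a$ in \emph{each} of the two $s$'s while appearing only once as a visit of $\pi$. The overcounting is corrected by the $\frac{1}{a}$ prefactor on the $m = 0$ terms; summing over the admissible range $0 \le m \le k \le (n-2)/2$ then gives the stated formula. The parity constraint $k \equiv m \equiv (n-2)/2 \pmod 2$ is automatic because $s_{(n-2)/2}(k,m)$ vanishes otherwise. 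The main, and essentially only, obstacle is this middle-vertex bookkeeping that produces the $1/a$; the rest is a direct application of the $r_n$ identity~\eqref{eqn:rn_CTform_changed} together with the reflection symmetry of binomial paths through a vertical axis.
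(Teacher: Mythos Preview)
Your argument is correct and follows essentially the same route as the paper: trim the four forced edges at the ends of a grafted polygon, split the remaining pair of nonintersecting paths at the midpoint $\x = n/2$, reflect one half, and identify each half with an $s_{(n-2)/2}(k,m)$ configuration, with the $1/a$ correction when the shared middle vertex of the lower path lies in the surface. The paper states the construction in the opposite direction (gluing two path-pairs into a polygon) but the bijection and the bookkeeping are identical.
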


\begin{proof}
	$s_{\frac{n-2}{2}}(k,m)$ is the total weight of pairs of paths of length $\frac{n-2}{2}$, starting at $(0,0)$ and $(0,2)$ and ending at $\left(\frac{n-2}{2},m\right)$ and $\left(\frac{n-2}{2},k+2\right)$. For given $k$ and $m$, we can take two such pairs, reflect one of them through the vertical line $\x=\frac{n-2}{2}$, and take the whole thing to be one large pair of paths, the lower going from $(0,0)$ to $(n-2,0)$ and the upper going from $(0,2)$ to $(n-2,2)$. The surface weight of this new pair of paths (that is, $a$ raised the power of the total number of visits) will be equal to the product of the weights of the original two pairs, unless $m=0$, in which case it will be off by a factor of $a$. The height of the upper ``middle'' vertex will be $k+2$. Since each of the original paths had length $\frac{n-2}{2}$, the total length of this new pair is $2n-4$; adding two edges at the left and right to close things up gives total length of $2n$. 
\end{proof}

To deal with the odd $n$ case, define $\hat s_n(k,m) = r_n\left(1,0;\frac{k+m}{2},\frac{k-m}{2}\right)$, with the specialisation $\hat s_n^1(k,m) = \hat s_n(k,m)|_{a=1}$. This time we have
\begin{align}
	\hat s^1_n(k,m) &= \mathrm{CT}_{\{\e_1,\e_2\}}\left[(\e_1+\oe_1)^n(\e_2+\oe_2)^n \e_1^{k-1}\e_2^{m-1}(1-\e_1^4)(1-\e_2^4)(\e_1^2-\e_2^2)(\e_1^2-\oe_2^2)\right] \notag \\
	&= \frac{(k+3)(m+1)(k-m+2)(k+m+4)}{4(n+2)(n+3)(n+4)^2}\binom{n+4}{\frac12(n+k-1)}\binom{n+4}{\frac12(n+m+5)}.\label{eqn:hatsnl_a1_exact}
\end{align}
Then $\hat s_n(k,m)$ expands in the same way as~\eqref{eqn:snl_exact}:
\begin{equation}\label{eqn:hatsnl_exact}
	\hat s_n(k,m) = \sum_{w=0}^n (a-1)^w \sum_{p=0}^w \hat s_n^1(k+2p,m+2w-2p).
\end{equation}
This time $\hat s_n(k,m)\neq 0$ only if $0\leq m \leq k \leq n+1$ and $k \equiv m \equiv n+1 \,(\text{mod }2)$.

The following can be proved in the same way as \cref{lem:even_grafted_from_paths}.
\begin{lem}\label{lem:odd_grafted_from_paths}
	For odd $n$, the partition function $P^\textup{G}_{2n}(a,y)$ is given by
	\begin{equation}\label{eqn:odd_grafted_from_paths}
		P^\textup{G}_{2n}(a,y) = \frac1a \sum_{k=0}^{\frac{n-3}{2}} s_{\frac{n-3}{2}}(k,0)\hat s_{\frac{n-1}{2}}(k,0) y^{k+2} + \sum_{k=0}^{\frac{n-3}{2}}\sum_{m=1}^k s_{\frac{n-3}{2}}(k,m)\hat s_{\frac{n-1}{2}}(k,m) y^{k+2}.
	\end{equation}
\end{lem}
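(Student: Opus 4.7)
The approach mirrors the proof of \cref{lem:even_grafted_from_paths}. The plan is to remove the four edges whose positions are fixed by the grafting conditions, slice the remainder at the height-recording middle vertex, identify each resulting half with a pair of nonintersecting binomial paths enumerated by the constant-term formula~\eqref{eqn:rn_CTform_changed}, and correct a double count at $m=0$. The only new feature compared to the even case is that for odd $n$ the two halves are forced to have different lengths, which is precisely what produces the asymmetric product $s_{(n-3)/2}\cdot\hat s_{(n-1)/2}$ in the statement, rather than the squared $s$ in \eqref{eqn:even_grafted_from_paths}.

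For $\pi\in\mathcal P^\textup{G}_{2n}$ with $n$ odd, the grafting prescription fixes the leftmost vertex at $\y$-coordinate $1$ and the rightmost at $\y$-coordinate $2$. Deleting the two leftmost and two rightmost edges (and translating so the remainder begins at $\x=0$) leaves a pair of nonintersecting subpaths of $n-2$ steps each, running from $(0,0),(0,2)$ to $(n-2,1),(n-2,3)$. I then slice this pair at the $\x$-coordinate of $\pi^+_{\lfloor n/2\rfloor}$, producing a left pair of length $\tfrac{n-3}{2}$ starting at $(0,0),(0,2)$ and a right pair of length $\tfrac{n-1}{2}$ ending at heights $1$ and $3$. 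Reflecting the right pair through a vertical axis at its right endpoint yields a pair of length $\tfrac{n-1}{2}$ that starts at $(0,1),(0,3)$ and ends at the middle heights.

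Writing $m$ and $k+2$ for the common heights of the lower and upper middle vertices, non-intersection and positivity force $0\le m\le k$, while the length of the left piece forces $k\le\tfrac{n-3}{2}$ with the appropriate parity (which agrees with the parity required by $\hat s_{(n-1)/2}$). The left piece is then enumerated by $s_{(n-3)/2}(k,m)$, and the reflected right piece by $\hat s_{(n-1)/2}(k,m)=r_{(n-1)/2}\bigl(1,0;\tfrac{k+m}{2},\tfrac{k-m}{2}\bigr)$, since $j_0=1$ is precisely what places the lower endpoint of the reflected right piece at $\y=1$. The polygon's height is $k+2$, supplying the factor $y^{k+2}$, and the visit weights of the two halves multiply, except that when $m=0$ the shared middle vertex of the lower path sits on $\y=0$ and is counted as a visit in both halves; dividing the $m=0$ contribution by $a$ corrects this and gives~\eqref{eqn:odd_grafted_from_paths}. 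The only thing really demanding care is the bookkeeping: checking that the four omitted edges contribute neither visits nor height, that the reflection really sends $r_{(n-1)/2}(1,0;\cdot,\cdot)$-data to $\hat s$-data (and not to some other specialisation of $r_{(n-1)/2}$), and that the shared middle vertex is the sole source of double counting.
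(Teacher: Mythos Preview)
Your proposal is correct and follows exactly the approach the paper intends: it is the odd-$n$ analogue of the proof of \cref{lem:even_grafted_from_paths}, with the asymmetric split into pieces of lengths $\tfrac{n-3}{2}$ and $\tfrac{n-1}{2}$ accounting for the appearance of $\hat s$ alongside $s$. The paper itself gives no separate argument, stating only that the lemma ``can be proved in the same way as \cref{lem:even_grafted_from_paths}'', and your write-up fills in precisely those details (including the $j_0=1$ identification for the reflected right half and the $m=0$ double-count correction).
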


\subsection{Centred polygons via pairs of paths}\label{ssec:centred_paths}

The methods of \cref{ssec:grafted_paths} can be repeated to give an explicit formula for the partition function $P^\textup{C}_{2n}(a,y)$. It requires taking $j=l$ in~\eqref{eqn:rn_CTform_changed} and then computing the constant term as a function of $n,j_0$ and $l$. However, the significant cancellations which led to the nice formulations of~\eqref{eqn:snl_a1_exact} and~\eqref{eqn:hatsnl_a1_exact} cannot be applied here, and as a result the final formula is much more complicated.

This would not be so bad if it were still possible to work with said formula (or, more precisely, for \textsc{Mathematica} to do so). However, we have not been able to successfully carry out all the calculations outlined in \cref{sec:grafted_asymps} for centred polygons. For this reason we will not go into any details regarding the exact formula for $P^\textup{C}_{2n}(a,y)$.

\section{Exact asymptotics for grafted polygons}\label{sec:grafted_asymps}

In this section we will compute the asymptotic behaviour of $P^\textup{G}_{2n}(a,y)$ as $n\to\infty$ for all $a$ and $y$, in the process providing a proof of~\cref{thm:psiG}. There are nine different $(a,y)$ regimes which each give different asymptotics. For brevity we will not go through all nine, and will instead provide details for three different cases only. We will also focus only on even $n$, as the procedure for odd $n$ is virtually identical.

\subsection{The multicritical point: $(a,y) = (2,1)$}\label{ssec:a2_y1}

The partition function $P^\textup{G}_{2n}(a,y)$ is (as per~\eqref{lem:even_grafted_from_paths}) given by a triple sum (over $k,w,p$) plus a quadruple sum (over $k,m,w,p$), which we write as $P^\textup{G}_{2n}(a,y) = \frac1a P^\textup{G1}_{2n}(a,y) + P^\textup{G2}_{2n}(a,y)$. We first focus on $P^\textup{G1}_{2n}(a,y)$. Note that this is nonzero only if $n\equiv 2\,(\text{mod }4)$, so we assume this to be the case.

We begin by observing that the dominant contributions to $P^\textup{G1}_{2n}(2,1)$ come from values of $k,w$ and $p$ which are all $O\left(\sqrt{n}\right)$. (This is easily seen with numerical data, or by analytically maximising the summands.) We then take $s^1_n(k+2p,2w-2p)$ and apply Stirling's approximation
\begin{equation}\label{eqn:stirling}
	n! \sim \sqrt{2\pi n} \left(\frac{n}{e}\right)^n.
\end{equation}

As $n\to\infty$ the asymptotics of the sum can be found by approximating it with an integral. We thus set $n=\tilde{n}^2$, and then $k=\kappa \tilde n, w= \omega \tilde n$ and $p = \rho \tilde n$. The resulting expression is then expanded about $\tilde n=\infty$ (equivalently, set $\tilde n=1/\epsilon$ and expand about $\epsilon =0$). We then integrate the dominant term first in $\rho$ over $(0,\omega)$ and then in $\omega$ over $(0,\infty)$, giving
\begin{equation}\label{eqn:int_a2_y1}
	s_{\tilde n^2}(\kappa\tilde n,0)|_{a=2} \sim \frac{32\kappa^2 e^{-\kappa^2/2}}{\pi} \times \frac{4^{\tilde n^2}}{\tilde n^4}.
\end{equation}
Squaring the above, integrating in $\kappa$ over $(0,\infty)$ and returning to $n$ gives
\[\frac{96}{\pi^{3/2}} \times \frac{16^n}{n^{7/2}}.\]
However, this is not correct, because $s_n(k,m)$ is only nonzero if $k \equiv n\,(\text{mod }2)$, but in integrating~\eqref{eqn:int_a2_y1} in $\kappa$ we did not take this into account. We must thus divide this result by 2 to obtain correct asymptotics. Finally adjusting $n \mapsto \frac{n-2}{2}$, we find
\begin{equation}\label{eqn:G1_a2_y1_asymps}
	P^\textup{G1}_{2n}(2,1) \sim \frac{24\sqrt{2}}{\pi^{3/2}} \times \frac{4^n}{n^{7/2}}, \qquad\qquad n\equiv2\,(\text{mod }4).
\end{equation}

For $P^\textup{G2}_{2n}(2,1)$, the dominant contribution to the summands is when all of $k,m,w$ and $p$ are $O\left(\sqrt{n}\right)$. We follow the same procedure as above, with the additional substitution $m = \mu \tilde n$ and integral in $\mu$ over $(0,\kappa)$. Instead of being off by a factor of 2 at the end, we are off by a factor of 4, as we only want terms with $k \equiv m \equiv n\,(\text{mod }2)$. We arrive at
\begin{equation}\label{eqn:G2_a2_y1_asymps}
	P^\textup{G2}_{2n}(2,1) \sim \frac{4}{\pi} \times \frac{4^n}{n^3}, \qquad\qquad n\text{ even}.
\end{equation}

Note that $P^\textup{G2}_{2n}(2,1)$ is of strictly greater order than $P^\textup{G1}_{2n}(2,1)$. 

\subsection{Weakly attractive or repulsive surface and no force: $a<2$ and $y=1$}\label{ssec:asmall_y1}

When $a<2$ the energetic gain from a surface visit is less than the entropic loss. As a result, the dominant contribution to $P^\textup{G1}_{2n}(a,1)$ is now from terms with $w$ and $p$ in $O(1)$, with $k$ still $O\left(\sqrt{n}\right)$. We start with $s^1_n(k,0)$ as before, and again set $n=\tilde n^2$ and $k=\kappa \tilde n$, but leave $w$ and $p$ as constants (with respect to $\tilde n$) when expanding about $\tilde n=\infty$. The leading term is then summed over $p=0,1,\dots,w$ and $w=0,1,\dots$. We then square this and integrate in $\kappa$ over $(0,\infty)$. Finally, we convert back to $n$, divide by 2 (for the same reason as above), and adjust $n\mapsto \frac{n-2}{2}$. This gives
\begin{equation}\label{eqn:G1_asmall_y1_asymps}
	P^\textup{G1}_{2n}(a,1) \sim \frac{1920\sqrt{2}a^4}{\pi^{3/2}(a-2)^6} \times \frac{4^n}{n^{13/2}}, \qquad\qquad n\equiv2\,(\text{mod }4), \,a<2.
\end{equation}

The calculation for $P^\textup{G2}_{2n}(a,1)$ is analogous:
\begin{equation}\label{eqn:G2_asmall_y1_asymps}
	P^\textup{G2}_{2n}(a,1) \sim \frac{48a^2}{\pi(a-2)^4} \times \frac{4^n}{n^5}, \qquad\qquad n\text{ even}, \,a<2.
\end{equation}

\subsection{Strongly attractive surface with pulling force: $a>2$ and $y>1$}\label{ssec:abig_ybig}

In this regime, polygons with a large number of visits and a large height will dominate the partition function. For grafted polygons, there is no conflict here, because the pulling force is only ``felt'' by the upper walk and the visits are only ``felt'' by the lower walk.

For $P^\textup{G1}_{2n}(a,y)$, we first think of $\left(s_n(k,0)\right)^2y^{k+2}$ as $\left(s_n(k,0)y^{\frac{k+2}{2}}\right)^2$, and thus consider a pair of paths with a weight $a$ associated with surface visits and a weight $\sqrt{y}$ associated with the height of the upper path's endpoint. We can check (numerically or otherwise) that the dominant contribution comes from summands with $k$ and $w$ of $O(n)$ and $p$ of $O(1)$. If we set $k=\gamma n$ and $w = \delta n$, then we need to find the values of $\gamma$ and $\delta$ which maximise the asymptotic expression for the summands.

To do this, we take $s^1_n(k+2p,2w-2p)(a-1)^w y^{\frac{k+2}{2}}$, apply Stirling's approximation, make the above substitutions for $k$ and $w$, expand about $n=\infty$ and keep the leading term. Then take the logarithm, divide by $n$, separately take a derivative with respect to $\gamma$ and with respect to $\delta$, and finally send $n\to\infty$ in each. This results in
\begin{equation}\label{eqn:finding_gamma_delta}
	\frac12\log\left(\frac{y(1-\gamma)}{1+\gamma}\right) \qquad \text{and}\qquad \log\left(\frac{(a-1)(1-2\delta)}{1+2\delta}\right).
\end{equation}
The values of $\gamma$ and $\delta$ which set these two quantities to 0, and thus maximise the asymptotic growth rate, are $\frac{y-1}{y+1}$ and $\frac{a-2}{2a}$ respectively. Furthermore, it can be verified that the peak in the asymptotic growth rate, as a function of $k$ or of $w$, is of width $O\left(\sqrt{n}\right)$.

We then repeat the procedure of \cref{ssec:a2_y1,ssec:asmall_y1}, this time with $n=\tilde n^2, k = \gamma \tilde n^2 + \kappa \tilde n$ and $w=\delta\tilde n^2 + \omega \tilde n$, and integrating in both $\kappa$ and $\omega$ over $(-\infty,\infty)$. This results in
\begin{equation}\label{eqn:G1_abig_ybig_asymps}
	\begin{split} P^\textup{G1}_{2n}(a,y) \sim \frac{(a-2)^2(y-1)^2((a-1)y-1)^2}{\sqrt{2\pi}(a-1)^3y^{3/2}(y+1)} &\times \frac{1}{\sqrt{n}}\left(\frac{a(y+1)}{\sqrt{(a-1)y}}\right)^n, \\ &\qquad\qquad n \equiv 2\,(\text{mod }4), \, a>2, \, y>1. \end{split}
\end{equation}

For $P^\textup{G2}_{2n}(a,y)$, we must be careful. The dominant contribution is with $k,w$ and $p$ as above and with $m=O(1)$ (the strongly attractive surface means the bottom path tends to stay very close). But since $m$ must be of the same parity as $n$ for $s_n(k,m)$ to be nonzero, the even and odd cases have slightly different asymptotics. Eventually, one finds
\begin{equation}\label{eqn:G2_abig_ybig_asymps_even}
	\begin{split} P^\textup{G2}_{2n}(a,y) \sim \frac{(a-2)(y-1)^2((a-1)y-1)^2}{\sqrt{2\pi}a(a-1)^3y^{3/2}(y+1)} &\times \frac{1}{\sqrt{n}}\left(\frac{a(y+1)}{\sqrt{(a-1)y}}\right)^n, \\ &\qquad\qquad n \equiv 2\,(\text{mod }4), \, a>2, \, y>1 \end{split}
\end{equation}
and
\begin{equation}\label{eqn:G2_abig_ybig_asymps_odd}
	\begin{split} P^\textup{G2}_{2n}(a,y) \sim \frac{(a-2)(y-1)^2((a-1)y-1)^2}{\sqrt{2\pi}a(a-1)^2y^{3/2}(y+1)} &\times \frac{1}{\sqrt{n}}\left(\frac{a(y+1)}{\sqrt{(a-1)y}}\right)^n, \\ &\qquad\qquad n \equiv 0\,(\text{mod }4), \, a>2, \, y>1. \end{split}
\end{equation}

\subsection{Summary for all regions}\label{ssec:grafted_summary}

The above calculations can be repeated for odd $n$ using~\eqref{eqn:odd_grafted_from_paths}; the growth rates and critical exponents do not change, but in general the amplitudes are different. For brevity we omit details.

For the remaining regions in the $(a,y)$ plane, we summarise the results in \cref{table:PG_asymps}, giving only the exponential growth and the critical exponent for $P^\textup{G}_{2n}(a,y)$. (Note that for all regions with $a>2$ and/or $y<1$, the amplitude depends on the value of $n\,(\text{mod }4)$ (as in \cref{ssec:abig_ybig}), but the growth rate and exponent are independent of this.)
\begin{table}[h]
\centering
\begin{tabular}{| C{1.3cm} | C{4.1cm} | C{4.1cm} | C{4.1cm} |}
\cline{2-4}
\multicolumn{1}{C{1.3cm} |}{} & \btm a<2 \etm & \btm a=2 \etm & \btm a>2 \etm \\ \hline
\btm y>1 \etm & \btm n^{-2}\times\left(\frac{2(y+1)}{\sqrt{y}}\right)^n \etm & \btm n^{-1}\times \left(\frac{2(y+1)}{\sqrt{y}}\right)^n \etm & \btm n^{-1/2}\times\left(\frac{a(y+1)}{\sqrt{(a-1)y}}\right)^n \etm \\ \hline
\btm y=1 \etm & \btm n^{-5}\times 4^n \etm & \btm n^{-3}\times 4^n \etm & \btm n^{-3/2}\times\left(\frac{2a}{\sqrt{a-1}}\right)^n \etm \\ \hline
\btm y<1 \etm & \btm n^{-10}\times 4^n \etm & \btm n^{-6}\times4^n \etm & \btm n^{-3}\times\left(\frac{2a}{\sqrt{a-1}}\right)^n \etm \\ \hline
\end{tabular}
\caption{The dominant asymptotic behaviour of $P^\textup{G}_{2n}(a,y)$ for all values of $a,y>0$, with amplitudes omitted.}
\label{table:PG_asymps}
\end{table}

By taking logs, dividing by $2n$ and taking $n\to\infty$, we exactly obtain \cref{thm:psiG}.

The regions $G_\textup{I}$, $G_\textup{II}$, $G_\textup{III}$ and $G_\textup{IV}$ correspond to the four different phases for grafted polygons: free, adsorbed, ballistic, and mixed (adsorbed and ballistic) respectively. See \cref{fig:phase_diagrams}.

\section{Lower bounds for centred polygons}\label{sec:centred_lbs}

In \cref{sec:grafted_asymps} we found the exact dominant asymptotics for $P^\textup{G}_{2n}(a,y)$, thus verifying \cref{thm:psiG} directly. For centred polygons, things are not so straightforward. The partition function $P^\textup{C}_{2n}(a,y)$ can be computed exactly as per \cref{sec:pairs-of-paths}, but the final form is sufficiently complicated so that we have not been able to obtain exact asymptotics. Instead, we will use several techniques to find upper and lower bounds for $P^\textup{C}_{2n}(a,y)$, and whose asymptotics will be found to be sharp. In this section we focus on lower bounds.

\subsection{Grafted and centred polygons}\label{ssec:G_and_C}

A slight modification to the calculations of \cref{sec:grafted_asymps} will give a sharp lower bound in regions $G_\textup{I}, G_\textup{II}$ and $G_\textup{IV}$.

We will say that a polygon $\pi\in \mathcal P^\textup{S}_{2n}$ is \emph{grafted and centred} if 
\begin{itemize}
	\item the vertex $\pi^-_{\lfloor n/2 \rfloor}$ has $\y$-coordinate 0, and
		\subitem $\circ$ if $n \equiv 0\,(\text{mod }4)$, the leftmost and rightmost vertices of $\pi$ have $\y$-coordinate 2,
		\subitem $\circ$ if $n \equiv 1\,(\text{mod }4)$, the leftmost (resp.~rightmost) vertex of $\pi$ has $\y$-coordinate 2 (resp.~1),
		\subitem $\circ$ if $n \equiv 2\,(\text{mod }4)$, the leftmost and rightmost vertices of $\pi$ have $\y$-coordinate 1, or
		\subitem $\circ$ if $n \equiv 3\,(\text{mod }4)$, the leftmost (resp.~rightmost) vertex of $\pi$ has $\y$-coordinate 1 (resp.~2).
\end{itemize}
That is, $\pi$ is grafted and centred if it is centred and if its leftmost and rightmost vertices have the smallest possible $\y$-coordinates. Let $\mathcal P^\textup{GC}_{2n}$ be the set of grafted and centred polygons of length $2n$, with associated partition function $P^\textup{GC}_{2n}(a,y)$.

\begin{lem}\label{lem:GC_comparison}
The free energy
\begin{equation}\label{eqn:GC_fe}
	\psi^\textup{GC}(a,y) = \lim_{n\to\infty} \frac{1}{2n}\log P^\textup{GC}_{2n}(a,y)
\end{equation}
exists and is equal to $\psi^\textup{G}(a,y)$. Furthermore,
\begin{equation}\label{eqn:GC_vs_C}
	\psi^\textup{G}(a,y) = \psi^\textup{GC}(a,y) \leq \liminf_{n\to\infty} \frac{1}{2n}\log P^\textup{C}_{2n}(a,y).
\end{equation}
\end{lem}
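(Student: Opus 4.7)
The plan is to reduce to the pairs-of-paths formulation of \cref{sec:pairs-of-paths}, observe that the ``centred'' condition translates to pinning the middle lower vertex at $\y=0$ (i.e.~$m=0$ in the notation of \cref{ssec:grafted_paths}), and then repeat the saddle-point analysis of \cref{sec:grafted_asymps} for this restricted class. The stated inequality will then follow from the trivial inclusion $\mathcal P^\textup{GC}_{2n}\subseteq\mathcal P^\textup{C}_{2n}$.

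First I would dispense with the easy inequality: every grafted-and-centred polygon is centred by definition, so $\mathcal P^\textup{GC}_{2n}\subseteq\mathcal P^\textup{C}_{2n}$, and summing the nonnegative weights $a^{v(\pi)}y^{h(\pi)}$ gives $P^\textup{GC}_{2n}(a,y)\leq P^\textup{C}_{2n}(a,y)$. Hence
\[\liminf_{n\to\infty}\tfrac{1}{2n}\log P^\textup{GC}_{2n}(a,y)\leq \liminf_{n\to\infty}\tfrac{1}{2n}\log P^\textup{C}_{2n}(a,y),\]
and~\eqref{eqn:GC_vs_C} follows once we know $\psi^\textup{GC}=\psi^\textup{G}$.

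For existence of $\psi^\textup{GC}$ and its equality with $\psi^\textup{G}$, the first step is to derive, for each residue $n\bmod 4$, a pairs-of-paths formula for $P^\textup{GC}_{2n}$ analogous to \cref{lem:even_grafted_from_paths,lem:odd_grafted_from_paths}. A grafted-and-centred polygon splits at its middle (lower) vertex into two pairs of nonintersecting binomial paths whose lower paths both begin and end on the surface; this is precisely the $m=0$ slice of the setup of \cref{ssec:grafted_paths}, extracted from~\eqref{eqn:rn_CTform_changed} by specialising $m=0$ and taking $j_0\in\{0,1\}$ according to the parity-dictated $\y$-coordinate of the leftmost vertex. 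For $n\equiv 2\pmod 4$ this simply recovers $\tfrac{1}{a}P^\textup{G1}_{2n}(a,y)=\tfrac{1}{a}\sum_k s_{(n-2)/2}(k,0)^2 y^{k+2}$ from \cref{lem:even_grafted_from_paths}; the three other residues yield analogous single sums built from products of $s_{N_L}(k,0)$ and/or $\hat s_{N_R}(k,0)$ with $N_L+N_R=n-2$.

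The final step is to repeat the Laplace/saddle-point analysis of \cref{sec:grafted_asymps} region by region. The analysis is simpler than in \cref{sec:grafted_asymps}, since the $m$-summation has disappeared and only a triple sum over $k,w,p$ remains (the last two arising from the expansion~\eqref{eqn:snl_exact} of $\mathcal H_1, \mathcal H_2$). For each $(a,y)\in G_\textup{I}\cup\cdots\cup G_\textup{IV}$ the optimal scaling of $k,w,p$ is the same as in \cref{sec:grafted_asymps} (the saddle is determined by $a$ and $y$, independently of $m$), and Laplace's method produces $P^\textup{GC}_{2n}(a,y)\sim A(a,y)\,n^{-c(a,y)}\,e^{2n\psi^\textup{G}(a,y)}$, from which $\psi^\textup{GC}=\psi^\textup{G}$ follows. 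The main obstacle is the bookkeeping required to verify that setting $m=0$ does not lower the exponential rate in any region: this is automatic in the adsorbed phases $G_\textup{II}, G_\textup{IV}$ (the optimal $m$ is $O(1)$, so pinning it costs only a constant) and only slightly less so in the free/ballistic phases $G_\textup{I}, G_\textup{III}$ (the optimal $m$ is $O(\sqrt n)$, so the cost is a polynomial factor that vanishes under $\tfrac{1}{2n}\log$). The one place where extra care is needed is the ballistic region $G_\textup{III}$, where $k$ itself is of order $n$: here a direct Stirling computation shows that the saddle for $k$ in $\sum_k s_N(k,0)^2 y^{k+2}$ lies at $k^*=\tfrac{y-1}{y+1}N$, and yields exactly the exponential rate $2\psi^\textup{G}(a,y)$ given by \cref{thm:psiG} in that region.
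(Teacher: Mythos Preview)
Your proposal is correct and follows essentially the same approach as the paper's own (sketch) proof: the inequality comes from the inclusion $\mathcal P^\textup{GC}_{2n}\subseteq\mathcal P^\textup{C}_{2n}$, and the equality $\psi^\textup{GC}=\psi^\textup{G}$ is obtained by writing $P^\textup{GC}_{2n}$ explicitly as the $m=0$ slice of the pairs-of-paths formulas (the paper gives exactly your four cases according to $n\bmod4$ in~\eqref{eqn:PGC_exact}) and then rerunning the asymptotic analysis of \cref{sec:grafted_asymps}. Your additional remarks on why pinning $m=0$ costs at most a polynomial factor in each region are a helpful elaboration of what the paper leaves implicit in its sketch.
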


\begin{proof}[Proof (sketch)]
Since grafted and centred polygons are a subset of centred polygons, we have $P^\textup{GC}_{2n}(a,y) \leq P^\textup{C}_{2n}(a,y)$ for all $a,y$. The inequality in~\eqref{eqn:GC_vs_C} immediately follows.

For the first part of the proof, observe that $P^\textup{GC}_{2n}(a,y)$ can be computed exactly using the quantities defined in \cref{sec:pairs-of-paths}. Namely,
\begin{equation}\label{eqn:PGC_exact}
	P^\textup{GC}_{2n}(a,y) = \begin{cases}
	\frac1a \sum_{k=0}^{\frac{n}{2}}\left(\hat s_{\frac{n-2}{2}}(k,0)\right)^2 y^{k+2} & \quad n \equiv 0\,(\text{mod }4) \\
	\frac1a \sum_{k=0}^{\frac{n-1}{2}}\hat s_{\frac{n-3}{2}}(k,0)s_{\frac{n-1}{2}}(k,0) y^{k+2} & \quad n \equiv 1\,(\text{mod }4) \\
	\frac1a \sum_{k=0}^{\frac{n-2}{2}}\left(s_{\frac{n-2}{2}}(k,0)\right)^2 y^{k+2} & \quad n \equiv 2\,(\text{mod }4) \\
	\frac1a \sum_{k=0}^{\frac{n-3}{2}}s_{\frac{n-3}{2}}(k,0)\hat s_{\frac{n-1}{2}}(k,0) y^{k+2} & \quad n \equiv 3\,(\text{mod }4).
	\end{cases}
\end{equation}
The asymptotics can then be computed in the same way as in \cref{sec:grafted_asymps} (in fact, things are simpler, since there is no $m$ to sum over). We omit details. In some regions the critical exponent differs from those given in \cref{table:PG_asymps}, but the exponential growth rate does not change, and hence $\psi^\textup{GC}(a,y) = \psi^\textup{G}(a,y)$.
\end{proof}

\subsection{Binomial bridges}\label{ssec:bridges}

The lower bound~\eqref{eqn:GC_vs_C} is valid in region $G_\textup{III}$ but is not sharp there. To obtain a sharp lower bound on the free energy of centred polygons in region \(G_\textup{III}\), we will use a different construction.

We define a \emph{binomial bridge} (or just \emph{bridge}) of length \(n\) and height \(h\) to be a binomial path \((\omega_0,\omega_1,\dots,\omega_n)\) satisfying 
\[0 = \y(\omega_0) < \y(\omega_i) \leq \y(\omega_n) = h \qquad \text{for } i=1,\dots,n-1.\]
That is, it is a binomial path which starts at \(\y=0\) and ends at \(\y=h\), with all non-terminal vertices having \(\y\)-coordinate between 1 and \(h\). Let \(b_{n,h}\) be the number of bridges of length \(n\) and height \(h\), and define the partition function
\[B_n(y) = \sum_{h=1}^n b_{n,h} y^h.\]
\begin{lem}\label{lem:bridge_asymp}
For \(y>1\),
\begin{equation}\label{eqn:bridge_asymp}
	\lim_{n\to\infty} \frac1n \log B_n(y) = \lambda^\textup{P}(y) = \log\left(y^2+1\right) - \log y.
\end{equation}	
\end{lem}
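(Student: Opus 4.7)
The plan is to sandwich $\lim_{n\to\infty} \tfrac{1}{n}\log B_n(y)$ between matching bounds. For existence, I would first exploit supermultiplicativity: two bridges of lengths $n_1, n_2$ and heights $h_1, h_2$ can be concatenated (by translating the second so its initial vertex meets the terminal vertex of the first) into a bridge of length $n_1 + n_2$ and height $h_1 + h_2$, carrying weight $y^{h_1 + h_2}$. Hence $B_{n_1 + n_2}(y) \geq B_{n_1}(y)\, B_{n_2}(y)$, and Fekete's lemma gives the existence of $\lambda^{\textup{B}}(y) := \lim_n \tfrac{1}{n}\log B_n(y) = \sup_n \tfrac{1}{n}\log B_n(y)$.

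For the upper bound, every bridge is in particular a binomial path in $\Lp$ starting at $(0,0)$, so $B_n(y) \leq C_n(1, y)$ coefficient-wise in $y$. Applying \eqref{eqn:psiP_maxform} at $a = 1$---where $\kappa^{\textup{P}}(1) = \log 2 \leq \lambda^{\textup{P}}(y)$ for $y > 1$---yields $\lambda^{\textup{B}}(y) \leq \lambda^{\textup{P}}(y)$.

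For the matching lower bound, let $c^{*} := (y^{2} - 1)/(y^{2} + 1) \in (0, 1)$ and take $h = h(n)$ to be an integer near $c^{*} n$ of the correct parity so that $b_{n, h} \ne 0$. I would combine two classical ingredients. First, the cycle lemma gives exactly $\tfrac{h}{n}\binom{n}{(n + h)/2}$ binomial paths from $(0, 0)$ to $(n, h)$ whose intermediate vertices satisfy $\y \geq 1$. Second, the reflection principle for the upper wall $\y = h + 1$ expresses $b_{n, h}$ as an alternating sum whose $k = 0$ contribution is precisely this ballot count and whose $|k| \geq 1$ image terms are binomial coefficients of the form $\binom{n}{(n + h)/2 \pm k(h + 1)}$. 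For $h \sim c^{*} n$ linear in $n$, each such image is exponentially smaller than the leading term, so $b_{n, h(n)} = (1 + o(1))\tfrac{h(n)}{n}\binom{n}{(n + h(n))/2}$. Applying Stirling then gives
\[
b_{n, h(n)}\, y^{h(n)} \;\geq\; \frac{\mathrm{poly}(n)}{\sqrt{n}}\, \exp\!\Bigl( n\, \bigl[\, \log 2 - g(c^{*}) + c^{*}\, \log y \,\bigr] \Bigr), \qquad g(c) := \tfrac{1 + c}{2}\log(1 + c) + \tfrac{1 - c}{2}\log(1 - c),
\]
and substituting $1 + c^{*} = 2 y^{2}/(y^{2} + 1)$ and $1 - c^{*} = 2/(y^{2} + 1)$ collapses the bracket to $\log(y^{2} + 1) - \log y = \lambda^{\textup{P}}(y)$. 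Combined with the upper bound this gives $\lambda^{\textup{B}}(y) = \lambda^{\textup{P}}(y)$.

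The main technical obstacle I anticipate is establishing the exponential suppression of the reflection-principle corrections uniformly in $k \geq 1$ for $h$ in a window around $c^{*} n$; this is a standard but slightly fiddly Stirling exercise. Should it become unwieldy, one could sidestep it by lower-bounding $b_{n, h(n)}$ by the count of bridges whose intermediate heights lie strictly inside the narrower strip $[\epsilon n,\, h(n) - \epsilon n]$, where neither wall is binding and the count reduces to a two-free-wall Gaussian-type estimate---this loses only polynomial factors and delivers the same exponential rate.
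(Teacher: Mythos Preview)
Your argument works and takes a genuinely different route from the paper. The paper never touches supermultiplicativity, ballot counts, or Stirling; instead it argues entirely at the level of generating functions via a Hammersley--Welsh unfolding. Decomposing any half-plane walk into an alternating sequence of bridges and reflected bridges yields $ty\,C(t;1,y) \le \exp\bigl(B(t;y)+B(t;1/y)\bigr)$, and since $B_n(y)>B_n(1/y)$ for $y>1$ the radius of convergence of the right-hand side (as a power series in $t$) is that of $B(t;y)$ alone. Combined with the trivial inclusion $B(t;y)\le ty\,C(t;1,y)$, the two radii coincide and the lemma follows. Your approach is more hands-on and self-contained; the paper's is shorter but imports the unfolding idea from the self-avoiding walk literature.

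One claim in your primary lower-bound argument is not quite right, though your fallback repairs it. You assert that the $|k|\ge 1$ image terms $\binom{n}{(n+h)/2\pm k(h+1)}$ are exponentially smaller than the leading ballot term when $h\sim c^{*} n$. For the $+$ shifts this is true, but the $k=1$ minus-shift lands at $(n-h-2)/2$, which sits at essentially the same distance from $n/2$ as $(n+h)/2$ and hence has the \emph{same} exponential order. In the correct two-wall reflection expansion, several terms of leading order survive and combine (as a discrete second difference) to a positive constant multiple of $\binom{n}{(n+h)/2}$; one gets $b_{n,h(n)}\sim C(c^{*})\binom{n}{(n+h)/2}$ with $0<C(c^{*})<c^{*}$, not $(1+o(1))\tfrac{h}{n}\binom{n}{(n+h)/2}$. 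The exponential rate, which is all you need, is unaffected, and your stated fallback (forcing $\epsilon n$ deterministic up-steps at each end and then sending $\epsilon\to 0$) delivers it cleanly without any reflection bookkeeping.
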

\noindent This statement is proved in \cref{app:bridges}.

\cref{lem:bridge_asymp} will allow us to prove the following.
\begin{lem}\label{lem:bridge_lb}
For \(y>1\),
\begin{equation}\label{eqn:bridge_lb}
	\liminf_{n\to\infty} \frac{1}{2n} \log P^\textup{C}_{2n}(a,y) \geq \lambda^\textup{P}\left(\sqrt{y}\right) = \log(y+1) - \frac12\log y.
\end{equation}
\end{lem}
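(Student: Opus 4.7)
The plan is to construct centred polygons whose upper and lower paths are each built from two binomial bridges of length $n/2$, and then apply \cref{lem:bridge_asymp}. The key observation driving the approach is that the leftmost and rightmost vertices of a centred polygon are not pinned to any fixed height, so both $\pi^+$ and $\pi^-$ can float far above the surface. This extra freedom lets both paths contribute independently to the entropy, raising the lower bound from the $\tfrac12 \lambda^\textup{P}(\sqrt{y})$ that a naive two-bridge (upper-path-only) construction would give, up to the full $\lambda^\textup{P}(\sqrt{y})$.

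Assume $n$ is even (the odd case is analogous, with bridges of lengths $(n\pm 1)/2$). For integers $L \geq 1$ and $H \geq L+1$ of appropriate parity, I would consider polygons with leftmost vertex $(0,L)$, rightmost vertex $(n,L)$, centred midpoints $\pi^-_{n/2} = (n/2,0)$ and $\pi^+_{n/2} = (n/2,H)$, subject to a ``separation strip'' at height $L$: the two halves of $\pi^-$ are required to have all interior vertices at heights in $\{1,\dots,L\}$, while the two halves of $\pi^+$ have interior vertices at heights in $\{L+1,\dots,H\}$. Via a reversal--reflection--translation, each half of $\pi^-$ is in bijection with a length-$n/2$ bridge of height $L$; via a downward shift by $L$, each half of $\pi^+$ becomes a length-$n/2$ bridge of height $H-L$. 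The separation strip forces $\pi^+_i > \pi^-_i$ at every interior $i$, the only surface vertex is $\pi^-_{n/2}$ so $v(\pi)=1$, and $h(\pi) = H$. Summing over $L$ and $H$ and factorising $y^H = y^L \cdot y^{H-L}$,
\[
P^\textup{C}_{2n}(a,y) \;\geq\; a\, \sum_{L \geq 1}\sum_{H > L} b_{n/2,L}^{\,2}\, b_{n/2,H-L}^{\,2}\, y^{H} \;=\; a\, S_{n/2}(y)^{2},
\]
where $S_N(y) := \sum_{h \geq 1} b_{N,h}^{\,2}\, y^{h}$.

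To finish, I would observe that $B_N(\sqrt{y}) = \sum_h b_{N,h}(\sqrt{y})^{h}$ has at most $N+1$ nonzero terms, so by pigeonhole some $h^{\star}$ satisfies $b_{N,h^{\star}}(\sqrt{y})^{h^{\star}} \geq B_N(\sqrt{y})/(N+1)$, which yields
\[
S_N(y) \;\geq\; b_{N,h^{\star}}^{\,2}\, y^{h^{\star}} \;\geq\; \frac{B_N(\sqrt{y})^{2}}{(N+1)^{2}}.
\]
Plugging this into the previous display, taking logarithms, dividing by $2n$, and letting $n\to\infty$ with $N=n/2$ while invoking \cref{lem:bridge_asymp} gives
\[
\liminf_{n\to\infty}\, \frac{1}{2n}\log P^\textup{C}_{2n}(a,y) \;\geq\; \lim_{N\to\infty} \frac{1}{N}\log B_N(\sqrt{y}) \;=\; \lambda^\textup{P}(\sqrt{y}),
\]
as required.

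The main obstacle is identifying the right combinatorial construction -- specifically, recognising that four bridges (two for $\pi^+$ \emph{and} two for $\pi^-$) are needed, rather than the naive two-bridge construction that worked for grafted polygons, in order to reach the exponential rate $e^{2n\lambda^\textup{P}(\sqrt{y})}$. Once the four-bridge scheme is in place and the separation strip at height $L$ is used to enforce strict nesting without further interaction between the four pieces, the remaining work -- parity bookkeeping, the pigeonhole estimate, and the straightforward adaptation for $n$ odd -- is routine.
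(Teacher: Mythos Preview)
Your proof is correct and follows essentially the same strategy as the paper. Both arguments build centred polygons from four bridges (two for $\pi^-$, two for $\pi^+$), use a most-popular/pigeonhole estimate to pass from $B_N$ to a single height, and then invoke \cref{lem:bridge_asymp}; the paper simply takes all four bridges of the same height $h^*$ from the outset, whereas you allow the $\pi^-$-height $L$ and $\pi^+$-height $H-L$ to differ and then reduce to a single $h^*$ via your pigeonhole bound on $S_N(y)$, which amounts to the same thing.
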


\begin{proof}
For given \(n\) and \(y>1\), define \(h^* \equiv h^*(n,y)\) to be the value of \(h\) between 1 and \(n\) which maximises \(b_{n,h}y^h\) (the so-called ``most popular \(h\)''). Then
\[b_{n,h^*}y^{h^*} \leq B_n(y) \leq nb_{n,h^*}y^{h^*},\]
so that
\[\lim_{n\to\infty} \frac1n \log\left(b_{n,h^*}y^{h^*}\right) = \lambda^\textup{P}(y).\]

Now we can construct a centred polygon of length \(4n\) and height \(2h\) by reflecting, translating and joining together four bridges of length \(n\) and height \(h\), as per \cref{fig:four_bridges} (left). In particular, given \(y>1\), this can be done with $h=h^*$. Then
\begin{equation}\label{eqn:bridges_concat_4n}
a\left(b_{n,h^*}y^{h^*}\right)^4 \leq P^\textup{C}_{4n}\left(a,y^2\right),
\end{equation}
where the factor of $a$ appears because the constructed polygon would have a single surface visit.

\begin{figure}
\centering
\begin{subfigure}{0.45\textwidth}
\centering	
\begin{tikzpicture}
\draw[line width=1ex, gray] (0,0) -- (4,0);
\draw[very thick, blue, decorate, decoration=snake] (2,0) -- (0,2);
\draw[very thick, blue, decorate, decoration=snake] (2,0) -- (4,2);
\draw[very thick, blue, decorate, decoration=snake] (0,2) -- (2,4);
\draw[very thick, blue, decorate, decoration=snake] (2,4) -- (4,2);
\node [circle, fill=red, inner sep=2.5pt] at (2,0) {};
\draw[very thick, <->] (5,0) -- (5,4);
\node at (5,2) [right=1ex] {$2h$};
\node at (0,4.4) {};
\end{tikzpicture}
\end{subfigure}
\hfill
\begin{subfigure}{0.45\textwidth}
\centering	
\begin{tikzpicture}
\draw[line width=1ex, gray] (0,0) -- (4.5,0);
\draw[very thick, blue, decorate, decoration=snake] (2,0) -- (0,2);
\draw[very thick, blue, decorate, decoration=snake] (2.5,0.5) -- (4.5,2.5);
\draw[very thick, blue, decorate, decoration=snake] (0,2) -- (2,4);
\draw[very thick, blue, decorate, decoration=snake] (2.5,4.5) -- (4.5,2.5);
\draw[ultra thick, blue] (2,0) -- (2.5,0.5);
\draw[ultra thick, blue] (2,4) -- (2.5,4.5);
\node [circle, fill=red, inner sep=2.5pt] at (2,0) {};
\draw[very thick, <->] (5.5,0) -- (5.5,4);
\node at (5.5,2) [right=1ex] {$2h$};
\end{tikzpicture}
\end{subfigure}
\caption{The construction of centred polygons of length $4n$ (left) and $4n+2$ (right) and height $2h$ from four bridges of length $n$ and height $h$.}
\label{fig:four_bridges}
\end{figure}
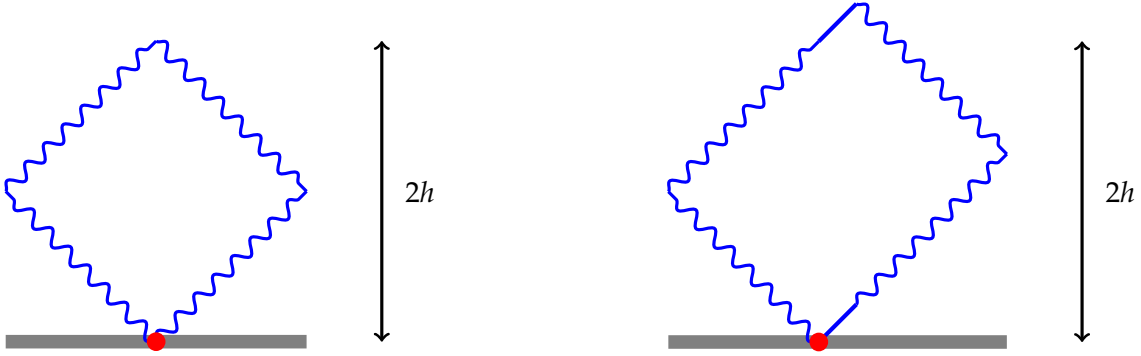

Centred polygons of length $4n+2$ can be constructed in a similar way, with the addition of two extra edges, as per \cref{fig:four_bridges} (right). This gives
\begin{equation}\label{eqn:bridges_concat_4n+2}
a\left(b_{n,h^*}y^{h^*}\right)^4 \leq P^\textup{C}_{4n+2}\left(a,y^2\right).
\end{equation}
Taking logs, dividing by $4n$ and sending $n\to\infty$ in~\eqref{eqn:bridges_concat_4n} and~\eqref{eqn:bridges_concat_4n+2}, we obtain
\[\liminf_{n\to\infty} \frac{1}{2n} \log P^\textup{C}_{2n}\left(a,y^2\right) \geq \lambda^\textup{P}(y),\]
and~\eqref{eqn:bridge_lb} follows.
\end{proof}

By examining the lower bounds given by \cref{lem:GC_comparison,lem:bridge_lb}, it can be	 seen that the RHS of~\eqref{eqn:psiC} is a lower bound for $\liminf_{n\to\infty} \frac{1}{2n} \log P^\textup{C}_{2n}(a,y)$.

\section{Upper bounds via noninteracting pairs of paths}\label{sec:ub}

\subsection{Upper bounds for centred polygons}\label{ssec:centred_ub}

To obtain a tight upper bound on the free energy of centred polygons, we will again consider configurations of pairs of paths. This time, however, we will relax the nonintersecting constraint. This simplifies matters, as the number of pairs of paths with no nonintersecting constraint is just the square of the number of single paths. We must still account for surface interactions, however.

Let $\mathcal B_n$ be the set of binomial paths of length $n$ in $\Lp$ which have leftmost vertex in the line $\x=0$. We then define $t_n(i,k)$ to be the total weight of all paths in $\mathcal B_n$ which start at height $\y=i$ and end at height $\y=k$, and accumulate a weight $a$ for each vertex in the surface $\y=0$. Also define the specialisation $t^1_n(i,k) = t_n(i,k)|_{a=1}$. (Note that $t_n(i,k)$ is nonzero only if $i-k \equiv n\,(\text{mod }2)$.) As with pairs of nonintersecting paths, the quantity $t_n(i,k)$ can be computed using a constant term method. The result is given in~\cite{vanRensburg2000Statistical}, to which we direct the reader for details.

\begin{lem}[{\cite[Eqn.~(5.33)]{vanRensburg2000Statistical}}]
\label{lem:single_path}
\begin{multline}\label{eqn:single_path}
t_n(i,k) = \binom{n}{\frac12(n-i+k)} - \binom{n}{\frac12(n+i+k)} \\ 
+ a\sum_{w=0}^{\frac12(n-i-k)}\left[\binom{n}{\frac12(n+i+k)+w} - \binom{n}{\frac12(n+i+k)+w+1}\right](a-1)^w.
\end{multline}
In particular,
\begin{equation}\label{eqn:single_path_a1}
t_n^1(i,k) = \binom{n}{\frac12(n-i+k)} - \binom{n}{\frac12(n+i+k)+1}.	
\end{equation}
\end{lem}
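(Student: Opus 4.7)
The plan is to derive the formula in two stages: first handle the $a = 1$ case by reflection, then obtain the general case by induction from the transfer-matrix recurrence.

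For $a = 1$, the identity
\[ t_n^1(i,k) = \binom{n}{\tfrac12(n-i+k)} - \binom{n}{\tfrac12(n+i+k) + 1} \]
is André's classical reflection formula: the first term counts all unrestricted binomial paths from $(0,i)$ to $(n,k)$, while the second counts those that touch $\y = -1$, via the standard bijection that reflects the initial segment (up to the first visit to $\y = -1$) about that line and identifies such paths with unrestricted paths from $(0, -i-2)$ to $(n, k)$. I would either cite this or reprove it in a sentence.

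For general $a$, I would first derive the transfer-matrix recurrence
\[ t_n(i,k) = t_{n-1}(i, k-1) + t_{n-1}(i, k+1) \quad (k \geq 1), \qquad t_n(i, 0) = a\, t_{n-1}(i, 1), \]
with initial data $t_0(i, k) = a^{[i=0]} \delta_{i, k}$, by appending a final step to an $(n-1)$-step path and accounting for the weight picked up when landing on the surface. Once this recurrence is in hand, it suffices to verify that the right-hand side of the lemma satisfies it. For $k \geq 1$, applying the Pascal identity $\binom{n}{j} = \binom{n-1}{j-1} + \binom{n-1}{j}$ to each binomial in the claimed formula reduces the recurrence to the tautology obtained by shifting $k \to k \pm 1$ at level $n - 1$. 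For $k = 0$, one multiplies the claimed formula at $(n-1, i, 1)$ by $a$, writes $a = 1 + (a-1)$ to absorb one factor of $(a-1)$, and re-indexes the sum over $w$ to match the $(n, i, 0)$ expression; the Pascal identity again handles the binomial bookkeeping. Agreement at $n = 0$ is a direct check from the definition.

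The main technical obstacle I anticipate is the $k = 0$ boundary step: because multiplication by $a$ mixes all powers of $(a-1)$, and because the upper summation limit $\tfrac12(n - i - k)$ depends on $k$, one has to align sums with different ranges and collect a potentially awkward top-of-range term. This is the step where the specific form of the $a$-prefactor in the lemma is crucial. An essentially equivalent alternative, which parallels the derivation of~\eqref{eqn:rn_CTform} for the two-path case, is the kernel/constant-term method: set up the generating function $G_i(x, \z) = \sum_{n, k} t_n(i, k)\, x^n \z^k$, obtain the kernel equation
\[ (1 - x(\z + \oz))\, G_i(x, \z) = \z^i a^{[i=0]} + (a-1)\,x\,A_1(x) - \oz\,x\,A_0(x) \]
(where $A_j(x) = \sum_n t_n(i, j)\, x^n$), determine the unknown boundary series $A_0, A_1$ by substituting the small root of $1 = x(\z + \oz)$, and extract $[x^n \z^k]$ via residues. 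The geometric expansion of the resulting rational factor $(1 - (a-1)\eta^2)^{-1}$ (as seen in the definition of $\mathcal H$ in Section~\ref{sec:pairs-of-paths}) is what produces the $\sum_w (a-1)^w$ sum in the lemma.
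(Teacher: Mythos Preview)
The paper does not actually prove this lemma: it is quoted directly from \cite[Eqn.~(5.33)]{vanRensburg2000Statistical}, and the surrounding text only remarks that ``the quantity $t_n(i,k)$ can be computed using a constant term method'' before referring the reader to that book. So your second alternative --- the kernel/constant-term derivation --- is precisely the route taken in the cited source, and there is nothing further in the present paper to compare against.

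Your primary proposal (Andr\'e reflection for $a=1$, then verification that the claimed expression satisfies the transfer recurrence $t_n(i,k)=t_{n-1}(i,k-1)+t_{n-1}(i,k+1)$ for $k\geq1$ and $t_n(i,0)=a\,t_{n-1}(i,1)$) is a genuinely different, more elementary argument, and it does go through. The $k\geq1$ case reduces to Pascal's identity termwise, using $\binom{n}{M+w}-\binom{n}{M+w+1}=\binom{n-1}{M+w-1}-\binom{n-1}{M+w+1}$; the only care needed is at the top index $w=N=\tfrac12(n-i-k)$, where the $k+1$ sum is one term shorter, but there $\binom{n-1}{n}=0$ so the extra term matches. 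For $k=0$ the two leading binomials in the formula cancel (since $\binom{n}{(n-i)/2}=\binom{n}{(n+i)/2}$), and writing $a^2=a+a(a-1)$ followed by a shift $w\mapsto w+1$ in the second piece aligns everything, exactly as you anticipated. So the ``awkward top-of-range term'' you flag is harmless. The advantage of your route is that it is self-contained and avoids the residue machinery; the advantage of the constant-term method is that it explains structurally where the $(a-1)^w$ expansion comes from (the geometric series in $\mathcal H$) and generalises directly to the two-path setting used elsewhere in the paper.
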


We now define
\begin{equation}
T_{4n}(a,y) = \frac1a \sum_{k=0}^{2n} \sum_{i_1,i_2=0}^n t_n^1(i_1,k) t_n(i_1,0) t_n^1(k,i_2) t_n(0,i_2) y^k 
= \frac1a \sum_{k=0}^{2n} \left(\sum_{i=0}^n t_n^1(i,k) t_n(i,0)\right)^2 y^k. \label{eqn:Tn_even}
\end{equation}
Observe that the summands count configurations of four (not necessarily nonintersecting) binomial paths: from $(0,i_1)$ to $(n,k)$ (without $a$ weights), from $(0,i_1)$ to $(n,0)$ (with $a$ weights), from $(n,k)$ to $(2n,i_2)$ (without $a$ weights), and from $(n,0)$ to $(2n,i_2)$ (with $a$ weights).

We likewise define
\begin{equation}\label{eqn:Tn_odd}
T_{4n+2}(a,y) = \frac1a \sum_{k=0}^{2n} \left(\sum_{i_1=0}^n t_n^1(i_1,k) t_n(i_1,0)\right) \left(\sum_{i_2=0}^{n+1}  t_{n+1}^1(k,i_2) t_{n+1}(0,i_2)\right) y^k.
\end{equation}
This is the same as~\eqref{eqn:Tn_even}, except the final point is $(2n+2,i_2)$ instead of $(2n,i_2)$.

Note that the objects counted by $T_{2n}(a,y)$ include all centred polygons of length $2n$: these are simply the configurations where the paths do not intersect. We thus have
\begin{equation}\label{eqn:fourpaths_vs_centred}
T_{2n}(a,y) \geq P^\textup{C}_{2n}(a,y)	
\end{equation}
for all $a,y>0$.

Since~\eqref{eqn:fourpaths_vs_centred} is a fairly crude upper bound, we are not interested in the detailed asymptotic behaviour of $T_{2n}(a,y)$. Instead, only the exponential growth rate is of interest. To determine this it is only necessary to find the values of $w,k$ and $i$ (or $i_1$ and $i_2$) which give the greatest contributions to~\eqref{eqn:Tn_even} and~\eqref{eqn:Tn_odd}. These can be found using the same methods as in \cref{sec:grafted_asymps}. (The subsequent summations and integrals used in \cref{sec:grafted_asymps} were only necessary for determining the critical exponents, not the growth rates.) We summarise the results in \cref{table:twopaths_greatest}.

\begin{table}
\small
\centering
\begin{tabular}{| C{0.9cm} | C{4.6cm} | C{4.6cm} | C{4.6cm} |}
\cline{2-4}
\multicolumn{1}{C{0.9cm} |}{} & \btm a<g(y) \etm & \btm a=g(y) \etm & \btm a>g(y) \etm \\ \hline
\btm y>1 \etm & \btm w = O(1),\, i \sim \gamma n,\, k \sim 2\gamma n \etm & \btm w \sim \frac{(1-\epsilon_n)\gamma}{2}n, \, i \sim \epsilon_n\gamma n, \newline k \sim (1+\epsilon_n)\gamma n, \, 0 \leq \epsilon_n \leq 1 \etm & \btm w \sim \delta n,\, i = O(1), \, k \sim \gamma n \etm \\ \hline
\btm y=1 \etm & \btm w = O(1),\, i = O(\sqrt{n}), \newline k = O(\sqrt{n}) \etm & \btm w = O(\sqrt{n}),\, i = O(\sqrt{n}), \newline k = O(\sqrt{n}) \etm & \btm w \sim \delta n, \, i = O(1),\newline k = O(\sqrt{n}) \etm \\ \hline
\btm y<1 \etm & \btm w = O(1),\, i = O(\sqrt{n}), \newline k = O(1) \etm & \btm w = O(\sqrt{n}), \, i=O(\sqrt{n}), \newline k = O(1) \etm & \btm w \sim \delta n, \, i=O(1), \, k = O(1) \etm \\ \hline
\end{tabular}
\caption{The values of $w,i$ and $k$ which give the greatest contributions to $T_{4n}(a,y)$, as per~\eqref{eqn:Tn_even}. Here $\gamma = \frac{y-1}{y+1}$, $\delta = \frac{a-2}{2a}$, and $g(y)$ is the phase boundary given by $g(y) = 2$ if $y \leq 1$ and $g(y) = y+1$ if $y>2$. For the $y>1$, $a=g(y)$ case, $\epsilon_n$ can be any quantity between 0 and 1 (not necessarily a constant), and the growth rate will be independent of $\epsilon_n$. For $T_{4n+2}(a,y)$ as per~\eqref{eqn:Tn_odd}, $i_1$ and $i_2$ take the same values as $i$ given here.}
\label{table:twopaths_greatest}
\end{table}

Once the dominant values of $w, k$ and $i$ (or $i_1$ and $i_2$) have been found, finding the exponential growth rate of $T_{2n}(a,y)$ is simply a matter of applying Stirling's approximation to the summands of~\eqref{eqn:Tn_even} and~\eqref{eqn:Tn_odd}, substituting the values from \cref{table:twopaths_greatest}, taking logs, dividing by $4n$ and taking $n\to\infty$. We omit details. The resulting growth rate is exactly given by the RHS of~\eqref{eqn:psiC}.

\begin{proof}[Proof of \cref{thm:psiC}]
By the results of \cref{sec:centred_lbs}, the RHS of~\eqref{eqn:psiC} has been shown to be a lower bound for $\liminf_{n\to\infty} \frac{1}{2n} \log P^\textup{C}_{2n}(a,y)$. By the results of this section, the RHS of~\eqref{eqn:psiC} is also the growth rate of $T_{2n}(a,y)$. Then by~\eqref{eqn:fourpaths_vs_centred}, this is an upper bound for $\limsup_{n\to\infty} \frac{1}{2n} \log P^\textup{C}_{2n}(a,y)$. The theorem immediately follows.
\end{proof}

Similarly to grafted polygons, the regions $C_\textup{I}$, $C_\textup{II}$, $C_\textup{III}$ and $C_\textup{IV}$ correspond to the four different phases for centred polygons: free, adsorbed, ballistic, and mixed (adsorbed and ballistic) respectively. See \cref{fig:phase_diagrams}.

\subsection{Upper bounds for all half-space polygons}\label{ssec:ub_all}

By slightly generalising the arguments of the previous section, we can obtain in a similar way a sharp upper bound for $\psi^\textup{S}(a,y)$, the free energy of all half-space polygons. Fix $q$ between $0$ and $2n$, and consider a pair of binomial paths, both starting at $(0,i_1)$ and finishing at $(2n,i_2)$ (but not necessarily nonintersecting). Designate one path as the ``lower path'', and only consider cases where the lower path has at least one vertex in the surface $\y=0$. Furthermore, consider only cases where the leftmost vertex of the lower path which is in the surface occurs at $(q,0)$. Let the lower path gain a weight $a$ for each vertex in the surface, and associate a weight $y$ with the height of the middle vertex of the upper path, as usual. Let $S_{2n,q}(a,y)$ be the total weight of all such configurations (i.e.~sum over all $i_1$ and $i_2$).

Using $t_n(i,k)$ and $t_n^1(i,k)$ as defined in the previous section, it is not difficult to see that if $q>1$ we have
\begin{equation}\label{eqn:S4n_q}
S_{4n,q}(a,y) = \sum_{k=0}^{\min\{n+q,3n-q\}}\sum_{i_1=1}^q\sum_{i_2=0}^{2n-q}t^1_n(i_1,k)t^1_{q-1}(i_1-1,0)t^1_n(k,i_2)t_{2n-q}(0,i_2)y^k,
\end{equation}
while for $q=0$
\begin{equation}
S_{4n,0}(a,y) = \sum_{k=0}^n \sum_{i_2=0}^{2n} t^1_n(0,k)t^1_n(k,i_2)t_{2n}(0,i_2)y^k.	
\end{equation}
Similar expressions can be found for $S_{4n+2,q}(a,y)$.

Then define
\begin{equation}\label{eqn:S4n_sum}
S_{4n}(a,y) = \sum_{q=0}^{2n} S_{4n,q}(a,y)
\end{equation}
and similarly for $S_{4n+2}(a,y)$. The relationship between $S_{2n}(a,y)$ and $P^\textup{S}_{2n}(a,y)$ is the same as that of $T_{2n}(a,y)$ and $P^\textup{C}_{2n}(a,y)$: the polygons are exactly the configurations where the upper and lower paths do not intersect. We thus have
\begin{equation}
S_{2n}(a,y) \geq P^\textup{S}_{2n}(a,y)	
\end{equation}
for all $a,y>0$.

We then proceed as in the previous section. We are interested only in the asymptotic growth rate of $S_{2n}(a,y)$, and thus in the values of $w,k,i_1,i_2$ and $q$ which contribute the most to $S_{2n}(a,y)$. The relevant values of $w,k,i_1$ and $i_2$ are as given in \cref{table:twopaths_greatest}. For $q$, a similar analysis can be performed, with the dominant values given in \cref{table:twopaths_q_greatest}. Note that in some cases the growth rate is entirely independent of $q$.

With the relevant values of $q$ in hand, the asymptotics of $S_{2n}(a,y)$ can be found in the same way as for $T_{2n}(a,y)$: apply Stirling's approximation to the summands of~\eqref{eqn:S4n_q}, substitute the relevant values of $w,i_1,i_2,k$ and $q$, take logs, divide by $2n$ and send $n\to\infty$. We omit details. The resulting growth rate is found to be equal to the RHS of~\eqref{eqn:psiC}.

\begin{proof}[Proof of \cref{thm:psi+}]
The proof is analogous to that of \cref{thm:psiC}. We have 
\[P^\textup{C}_{2n}(a,y) \leq P^\textup{S}_{2n}(a,y) \leq S_{2n}(a,y).\]
Take logs, divide by $2n$ and send $n\to\infty$. The LHS and RHS both go to the RHS of~\eqref{eqn:psiC}, and the result follows.
\end{proof}

\begin{table}
\centering
\begin{tabular}{| C{1.2cm} | C{3.5cm} | C{3.5cm} | C{3.5cm} |}
\cline{2-4}
\multicolumn{1}{C{1.2cm} |}{} & \btm a<g(y) \etm & \btm a=g(y) \etm & \btm a>g(y) \etm \\ \hline
\btm y>1 \etm & \btm q\sim n \etm & \btm q \sim \epsilon_n n, \, 0 \leq \epsilon_n \leq 1 \etm & \btm q = O(1) \etm \\ \hline
\btm y=1 \etm & \btm * \etm & \btm * \etm & \btm q = O(1) \etm \\ \hline
\btm y<1 \etm & \btm * \etm & \btm * \etm & \btm q = O(1) \etm \\ \hline
\end{tabular}
\caption{The value of $q$ which gives the greatest contribution to $S_{4n}(a,y)$, as per~\mbox{\eqref{eqn:S4n_q}--\eqref{eqn:S4n_sum}}. For the $y>1$, $a=g(y)$ case, $\epsilon_n$ is as per \cref{table:twopaths_greatest}. For the four cases with $*$, the growth rate is entirely independent of $q$. The results are the same for $S_{4n+2}(a,y)$.}
\label{table:twopaths_q_greatest}
\end{table}

\begin{figure}
\centering
\begin{subfigure}{0.49\textwidth}
\centering
\resizebox{\textwidth}{!}{
\begin{tikzpicture}[scale=1.3]
\draw[thin, <->] (0,4) -- (0,0) -- (5,0);
\draw[very thick, blue, ->] (2,0) -- (2,4);
\draw[very thick, blue, ->] (0,1) -- (5,1);
\node at (1,0.5) {$G_\textup{I}$ (free)};
\node at (3.5,0.5) {$G_\textup{II}$ (adsorbed)};
\node at (1,2.5) {$G_\textup{III}$ (ballistic)};
\node at (3.5,2.5) {$G_\textup{IV}$ (mixed)};
\node at (5,0) [right] {$a$};
\node at (0,4) [above] {$y$};
\node at (2,0) [below] {$a_\text{c} = 2$};
\node at (0,1) [left] {$y_\text{c} = 1$};
\end{tikzpicture}
}
\caption{}
\label{fig:phase_diag_grafted}
\end{subfigure}
\begin{subfigure}{0.49\textwidth}
\centering
\resizebox{\textwidth}{!}{
\begin{tikzpicture}[scale=1.3]
\draw[thin, <->] (0,4) -- (0,0) -- (5,0);
\draw[very thick, blue] (2,0) -- (2,1);
\draw[very thick, blue, ->] (0,1) -- (5,1);
\draw[very thick, blue, dashed, ->] (2,1) -- (5,4);
\node at (1,0.5) {$C_\textup{I}$ (free)};
\node at (3.5,0.5) {$C_\textup{II}$ (adsorbed)};
\node at (1.5,2.5) {$C_\textup{III}$ (ballistic)};
\node at (4.2,1.8) {$C_\textup{IV}$ (mixed)};
\node at (5,0) [right] {$a$};
\node at (0,4) [above] {$y$};
\node at (2,0) [below] {$a_\text{c} = 2$};
\node at (0,1) [left] {$y_\text{c} = 1$};
\node at (4,3) [above left, rotate=45] {$y=a-1$};
\end{tikzpicture}
}
\caption{}
\label{fig:phase_diag_grafted}
\end{subfigure}
\caption{\textbf{(a)} The phase diagram for grafted staircase polygons, as implied by \cref{thm:psiG}. \textbf{(b)} The phase diagram for centred staircase polygons, and thus all staircase polygons, as implied by \cref{thm:psiC,thm:psi+}. Solid lines indicate second-order phase transitions, while dashed lines indicate first-order transitions.}
\label{fig:phase_diagrams}
\end{figure}
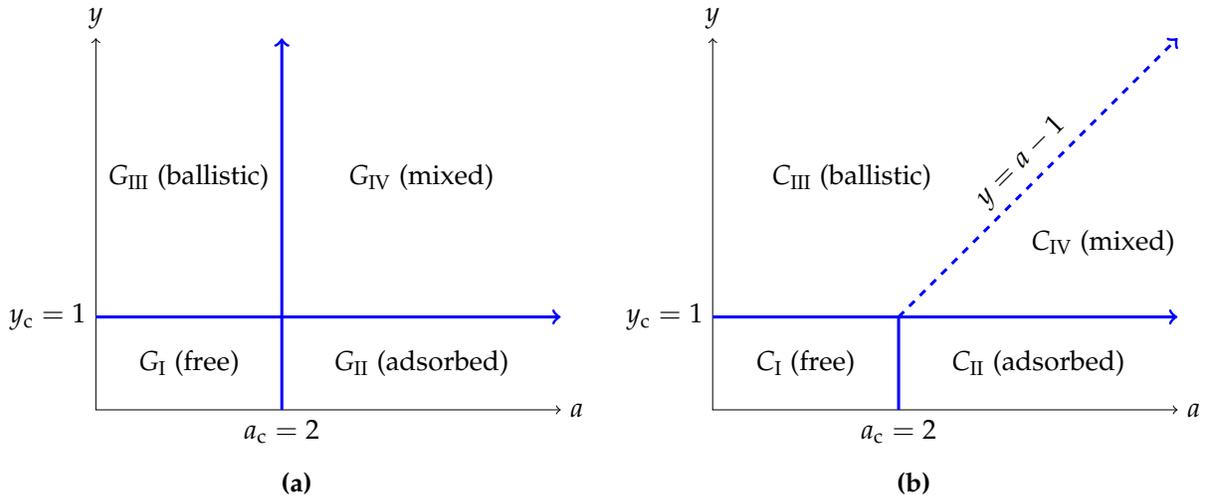

\section{Conclusion}\label{sec:conclusion}

We have defined several models of staircase polygons which interact with an impenetrable surface and have a pulling or pushing force applied at a fixed vertex. These are greatly simplified versions of two-dimensional self-avoiding polygons, which nevertheless allow for both exact solvability and a range of different physical behaviours.

The phase diagrams for the models solved here are given in \cref{fig:phase_diagrams}. The locations and natures of the phase transitions follow immediately from the free energies given in \cref{thm:psiG,thm:psiC}.

There are still open questions regarding the more general SAP model. Perhaps the most important question is whether two-dimensional SAPs exhibit a ``mixed'' phase, as staircase polygons do in region $G_\textup{IV}$. (It is known~\cite{GvRJW2017Polygons} that such behaviour does not occur for SAPs in three or more dimensions.) If SAPs do have a mixed phase then another question arises: is the phase boundary between the adsorbed and mixed phases a horizontal line at $y=1$ (as it is for staircase polygons), or does it increase with $a$? For staircase polygons the horizontal bounday is inevitable (because we cannot move from the mixed phase to the adsorbed phase by increasing $a$), but for a more general model this is not clear.

There are other ``intermediate'' models which are still simpler than SAPs but may display more complex asymptotic behaviour, the most obvious being column-convex polygons on the square lattice. Exact solutions for such models may not be easy to come by, however. Another alternative is to consider staircase polygons with a force applied at the highest vertex, rather than the middle vertex. As discussed in~\cite{GvRJW2017Polygons}, one would expect the free energy to be the same in all regions of the phase diagram, but when $y<1$ the sub-dominant asymptotic behaviour of the partition functions will likely involve more complex ``stretched exponential'' terms (see~\cite{BGJL_2015,Guttmann_2015}).

\section*{Acknowledgements}

The author is supported by the Australian Research Council project DE170100186. This work began while the author was visiting the Department of Chemistry at the University of Toronto, and their hospitality was greatly appreciated. The author thanks Stu Whittington, Buks van Rensburg and Tony Guttmann for helpful conversations.

\appendix

\section{Binomial bridges}\label{app:bridges}

This section is dedicated to a proof of \cref{lem:bridge_asymp}. Binomial bridges (sometimes called \emph{directed bridges}) have been studied in the past -- see for example~\cite{Brak_1999,Brak2005Directed}. However, we have been unable to find in the literature the exact result that we require, and thus we present a brief proof here.

First, recall the generating function $C(t;a,y)$ for half-plane binomial paths from \cref{ssec:binomial_paths}. We will not need surface visits here, so we set $a=1$. We will then similarly define the generating function for bridges as
\[B(t;y) = \sum_{n,h} b_{n,h} t^n y^h.\]

Now there is a standard method for decomposing a half-plane walk (of the type counted by $C(t;1,y)$) into an alternating sequence of bridges and reflected bridges. We refer the reader to~\cite[Chap.~3]{Madras1993SelfAvoiding} for details (the focus there is on self-avoiding walks, but the same principles apply). In particular, the equivalent form of~\cite[Eqn.~(3.1.13)]{Madras1993SelfAvoiding} here is
\begin{equation}\label{eqn:halfplane_vs_bridges_no_y}
tC(t;1,1) \leq \exp\left(B(t;1)\right).
\end{equation}
To incorporate the weight $y$, we use the fact that appending a bridge to a walk will increase the height of its endpoint, while attaching a reflected bridge will decrease the height. The inequality~\eqref{eqn:halfplane_vs_bridges_no_y} becomes
\begin{equation}\label{eqn:halfplane_vs_bridges_y}
tyC(t;1,y) \leq \exp\left(B(t;y) + B(t;1/y)\right).
\end{equation}

\begin{proof}[Proof of \cref{lem:bridge_asymp}]
Since the exponential function is entire, the radius of convergence of the RHS of~\eqref{eqn:halfplane_vs_bridges_y} (considered as a power series in $t$) is equal to the radius of convergence of $B(t;y) + B(t;1/y)$. For $y>1$, we have $B_n(y) > B_n(1/y)$, and so that radius of convergence is really that of $B(t;y)$.

Meanwhile, by inclusion we clearly have $B(t;y) \leq tyC(t;1,y)$. Combining these ideas, it follows that for $y>1$ the radius of convergence of $tyC(t;1,y)$ (and hence that of $C(t;y)$) is equal to that of $B(t;y)$. So the growth rate of $C_n(1,y)$ is equal to that of $B_n(y)$, and then by~\eqref{eqn:psiP_maxform}, the result follows.
\end{proof}

\printbibliography

\end{document}